\newcommand{\efx}{\textsf{EFX}}
\newcommand{\ef}{\textsf{EF}}
\newcommand{\np}{\textsf{NP}}
\newcommand{\mms}{\textsf{MMS}}
\newcommand{\po}{\textsf{PO}}
\DeclareMathOperator*{\argmax}{arg\,max}
\newcommand{\pluseq}{\mathbin{\mbox{+=}}}
\declaretheorem[numberwithin=section,refname={Theorem,Theorems},Refname={Theorem,Theorems}]{theorem}
\declaretheorem[numberlike=theorem]{lemma}
\declaretheorem[numberlike=theorem]{proposition}
\declaretheorem[numberlike=theorem]{corollary}
\declaretheorem[numberlike=theorem]{definition}
\declaretheorem[numberlike=theorem,style=remark]{remark}
\declaretheorem[numberlike=theorem, refname={Observation,Observations},Refname={Observation,Observations},name={Observation}]{observation}
\theoremstyle{definition}
\title{EFX Allocations and Orientations on Bipartite Multi-graphs: \\ A Complete Picture}
\date{}
\author{
  Mahyar Afshinmehr\thanks{A part of the work was done when MA was an intern at MPI-INF, Germany.}\\
   \texttt{mahyarafshinmehr@gmail.com} \\
   \and
   Alireza Danaei \\
   Massachusetts Institute of Technology, USA\\
   \texttt{alirez11@mit.edu}
   \and 
   Mehrafarin Kazemi\thanks{A part of the work was done when MK was an intern at MPI-INF, Germany.} \\
   Max Planck Institute for Software Systems, SIC, Germany\\
   \texttt{mkazemi@mpi-sws.org}
   \and
   Kurt Mehlhorn\\
   Max Planck Institute for Informatics, SIC, Germany\\
   \texttt{mehlhorn@mpi-inf.mpg.de}
\and
   Nidhi Rathi\\
   Max Planck Institute for Informatics, SIC, Germany\\
   \texttt{nidhirathi@gmail.com}
}
\begin{document}

\maketitle

\begin{abstract}
    We consider the fundamental problem of \emph{fairly} allocating a set of indivisible items among agents having valuations that are represented by a \emph{multi-graph} -- here, agents appear as vertices and items as edges between them and each vertex (agent) only values the set of its incident edges (items). The goal is to find a fair, i.e., \emph{envy-free up to any item} ($\efx$) allocation. This model has recently been introduced by \cite{christodoulou2023fair} where they show that $\efx$ allocations always exist on simple graphs for \emph{monotone} valuations, i.e., where any two agents can share at most one edge (item). A natural question arises as to what happens when we go beyond simple graphs and study various classes of multi-graphs?  
    
    We answer the above question affirmatively for the valuation class of \emph{bipartite multi-graphs} and \emph{multi-cycles}. The main contribution of this work is to establish the existence of $\efx$ allocations on bipartite multi-graphs for \emph{monotone} valuations and on multi-cycles for \emph{$\mms$-feasible} valuations. We also present pseudo-polynomial time algorithms to compute $\efx$ allocations for the above settings. Furthermore, we show that for bipartite multi-graphs with \emph{cancelable} valuations, $\efx$ allocations can be computed in polynomial time. We thus deepen the understanding of $\ef$ allocations by expanding the spectrum of settings in which they are guaranteed to exist for an arbitrary number of agents.

    Next, we study $\efx$ \emph{orientations} (allocations where every item is assinged to one of its two endpoint agents) and provide a complete characterization of their existence on bipartite multi-graphs in terms of two key parameters---the number of edges shared between any two agents and the diameter of the graph. Finally, we prove that it is $\np$-complete to determine whether a given fair division instance on a bipartite multi-graph admits an $\efx$ orientation, even with a constant number of agents.

\end{abstract}
\section{Introduction}

The theory of \emph{Fair Division} formalizes the classic problem of dividing a collection of resources to a set of participating players (referred to as \emph{agents}) in a \emph{fair} manner. This problem forms a key concern in the design of many social institutions and arises naturally in many real-world scenarios such as dividing business assets, assigning computational resources in a cloud computing environment, air traffic management, course assignments, divorce settlements, and so on (\cite{etkin2007spectrum,moulin2004fair,vossen2002fair,budish2012multi,pratt1990fair}). The theory of fair division lies at the crossroads of economics, social science, mathematics, and computer science, with its formal exploration beginning in 1948 by \citeauthor{steinhaus1948problem}, Banach, and Knaster. In recent years, this field has experienced a flourishing flow of research; see \cite{survey2023,brams1996fair,brandt2016handbook,robertson1998cake} for excellent expositions.

In this work, we study the \emph{discrete} setting of fair division where we wish to divide a set of $m$ indivisible items to a set $\{1,2, \dots, n\}$ of $n$ agents with each item being allocated \emph{wholly} to some agent. The standard notion of fairness in this line of work is that of \emph{envy-freeness}, which entails a division (allocation) as \emph{fair} if every agent values her bundle as much as any other bundle in the allocation (\cite{foley1967resource}). Since envy-free allocations may not always exist for the case of indivisible items,\footnote{Consider an instance with two agents valuing a single item positively. Here, the agent who does not receive the item will envy the other. In contrast, envy-free allocations always exist when the resource to be allocated is divisible (see \cite{stromquist1980cut,edward1999rental}).} several variants of envy-freeness have been explored in the literature. Among all, \emph{envy-freeness up to any item} ($\efx$) is considered to be the flag-bearer of fairness for the discrete setting (introduced by \cite{caragiannis2016unreasonable}). We say an allocation $X = (X_1, X_2, \dots, X_n)$ with bundle $X_i$ being allocated to agent $i \in [n]$ is $\efx$ if for every pair $i,j \in [n]$ of agents, agent $i$ prefers her bundle $X_i$ over $X_j \setminus \{g\}$ for \emph{every} item $g \in X_j$. $\efx$ is considered to be the ``closest analogue of envy-freeness'' for the discrete setting (\cite{caragiannis2019envy}). 


It remains a major open problem to understand the existence of $\efx$ allocations (\cite{procaccia2020technical}). Despite significant efforts, it is not known whether $\efx$ allocations exist for four or more agents, even with additive valuations (\cite{chaudhury2020efx}). This naturally points towards the complexity of this problem and motivates the study of its various kinds of relaxations. That is, one may begin to understand the concept of $\efx$ either by finding approximate $\efx$ allocations, or by studying special valuation classes, or by relaxing the notion of $\efx$ via \emph{charity} or its \emph{epistemic} form. We refer the readers to Section~\ref{sec:related} for further discussion. 

A question of interest is to understand for what settings or valuation classes, $\efx$ allocations are guaranteed to exist. To this effect, in this work, we consider a recently introduced model by \cite{christodoulou2023fair} where agent valuations are represented via a graph (or a multi-graph). Here, each item can have a positive value for at most two agents, and such an item is represented by an edge connecting those two agents. Equivalently, items are valued at zero by all agents who are not their endpoints. Thus, the vertices of the multi-graph correspond to agents, and the multi-edges between any pair of agents represent the items valued exclusively by them. For each agent $i \in [n]$, we assume a valuation function $v_i$ such that an edge $g$ has a positive marginal value ($v_i(S \cup g) - v_i(S) > 0$ for every subset $S$) if and only if $g$ is incident to $i$. We call this a \emph{fair division problem on multi-graphs} with the goal of finding complete $\efx$ allocations.  

This model captures situations in which the number of agents interested in any given item is limited—specifically, when an item is \emph{relevant} or \emph{valuable} to at most two agents. The graph-based formulation is inspired by geographic contexts where agents care only about nearby resources. For example, a trade corridor or a natural gas pipeline benefits only the bordering countries involved, while others gain nothing. Likewise, in the New Champions League format, each football match takes place at the home ground of one of the two participating teams, and the remaining teams are unaffected.

\cite{christodoulou2023fair} proved that $\efx$ allocations exist for all simple graphs, where any pair of agents vertices share at most \emph{one} edge item. This naturally raises the question of what happens in more general classes of multi-graphs, where a pair of agents can have multiple valuable items in common?

\subsection{Our Contribution and Techniques}
In this work, we answer the above question raised by \cite{christodoulou2023fair} and study fair division instances where agent valuations are represented via \emph{bipartite multi-graphs} and provide a complete picture of $\efx$ allocations for such instances. Note that bipartite multi-graphs are a huge class consisting of multi-trees, multi-cycles with even length, and planar multi-graphs with faces of even length, to name a few. 

We also study special types of (\emph{non-wasteful}) allocations---$\efx$ \emph{orientations}---where every item is allocated to one of the agent endpoints. Note that $\efx$ orientations are desirable since every item is allocated to an agent who values it, and hence, is \emph{non-wasteful}. Our main results are as follows.

\begin{table}[h!]
\centering
{
\begin{tabular}{|c|c|}
\hline
Parameters & $\efx$ Orientation \\
\hline
$\textit{acyclic}, q = 2$, $ d(G) \leq 4$ &  exists (Theorem \ref{thm:orientation-P5}) \\
\hline
$\textit{acyclic}, q = 2$,  $d(G) > 4$  & may not exist (Theorem \ref{thm:orientation-counter-example-q2})\\
\hline
$\textit{acyclic}, q > 2$, $  d(G)\leq 2$ & exists (Proposition \ref{prop:multi-stars})\\
\hline
$\textit{acyclic}, q > 2$,  $d(G) > 2$ & may not exist (Theorem \ref{thm:orientation-counter-example-bigq})\\
\hline
$\textit{cyclic}, q \geq 2$,  $ d(G) \geq 2$  & may not exist (Theorem \ref{thm:orientation-counter-example-cycle})\\
\hline
\end{tabular}
}

\caption{A complete picture for $\efx$ orientations on bipartite multi-graphs with additive valuations based on $q$ and $d(G)$.} \label{tab:orientation}
\end{table}

\begin{itemize}
    \item The primary contribution of this work is to establish the existence of $\efx$ allocations for fair division instances on \emph{bipartite multi-graphs} with general \emph{monotone} valuations (Theorem~\ref{thm:multi-bipartite-main}). Moreover, we can compute an $\efx$ allocation in pseudo-polynomial time for monotone valuations and in polynomial time for \emph{cancelable} valuations (Theorem~\ref{thm:multi-bipartite-main}). It is relevant to note that \emph{cancelable} valuations are a strict superclass of additive valuations (see Section~\ref{sec:prelim} for definitions).

    Hence, we answer the open problem listed in \cite{christodoulou2023fair} and deepen our understanding of the existence of $\efx$ allocations.

    \item  While extending our techniques beyond bipartite multi-graphs, we also prove that $\efx$ allocations exist and can be computed in pseudo-polynomial time on \emph{multi-cycle graphs} with $\mms$-\emph{feasible} valuations (Theorem~\ref{thm:multi-cycle}). Note that $\mms$-\emph{feasible} valuations are a strict superclass of cancelable valuations.

    We also show that for multi-cycles with at least four agents, $\efx$ allocations always exist and can be computed in pseudo-polynomial time for monotone valuations and in polynomial time for cancelable valuations (Corollary \ref{cor:cycle}).
    
    \item Next, we show that $\efx$ \emph{orientations} may not always exist for fair division instances on bipartite multi-graphs; in particular, they may not exist even in very simple settings of multi-trees. Nonetheless, we provide a complete characterization of the existence of $\efx$ orientations for monotone valuations depending on two key parameters: $q$, the maximum number of edges shared between any two agents, and $d(G)$, diameter of the graph $G$ (see Table~\ref{tab:orientation}). 

    The fact that $\efx$ orientations do not always exist can be seen as a proof to show that inefficiency is inherent and hence approximations are necessary. We show that, for additive valuations, there exist orientations on bipartite multi-graphs where at least $\lceil\frac{n}{2}\rceil$ agents are $\efx$ and the remaining agents are $1/2$-$\efx$ (Theorem~\ref{thm:1/2-efx}).

    \item We also show that for monotone valuations, we can compute $\efx$ orientations in polynomial time when the diameter of the acyclic bipartite multi-graph is at most four and any two adjacent vertices share at most two edge items (Theorem~\ref{thm:orientation-P5}). 
    
    \item It is $\np$-complete to decide whether a given fair division instance on bipartite multi-graphs admits an $\efx$ orientation, even with a constant number of agents (Theorem~\ref{thm:hardness-orientation-multi-tree}).
\end{itemize}

Two other papers independently and in parallel showed the existence of $\efx$ allocations involving multi-graphs. \cite{SS25} independently show that the $\efx$ allocations always exist in bipartite multi-graphs for monotone valuations. This coincides with our first result, but we further prove polynomial-time computability of $\efx$ allocations (in the above setting) for cancelable valuations and in pseudo-polynomial time for monotone valuations. Moreover, \cite{SS25} proves the existence of $\efx$ allocations for multi-graphs when each agent has at most $\lceil n/4 \rceil -1$ neighbors, where $n$ is the total number of agents, or when the shortest cycle with non-parallel edges has length at least $6$. 

\cite{BP24} independently showed the existence
of EFX allocations in bipartite multi-graphs for cancelable valuations and in multi-trees for general monotone valuations. Our first result generalizes these two results. Moreover, they showed that multi-graphs with chromatic number $t$ admit EFX allocations when the shortest cycle using non-parallel edges is of length at least $2t-1$; this result holds when agents have cancelable valuations. Note that the multi-graphs with chromatic number 2 are just the bipartite multi-graphs.\\

\noindent
\textbf{Technical Overview:}
 We will give a description of the main techniques developed in this work to prove the existence of $\efx$ allocations on bipartite multi-graphs with monotone valuations. For simplicity, we will assume additive valuations here. For a given bipartite multi-graph $G = (S \sqcup T, E)$, let us consider two adjacent vertices $i \in S$ and $j \in T$. The starting point in our technique is inspired by the \emph{cut-and-choose} protocol that is used to prove the existence of $\efx$ allocations for two agents.\footnote{For two agents, the first agent divides the set of items into two $\efx$-feasible bundles (for her), and the second agent chooses her favorite bundle of the two, while the other bundle goes to the first agent.} Based on this protocol, we will define one specific \emph{configuration} for the set of items, $E(i,j)$, between $i$ and $j$. In this configuration, $j \in T$ will cut the set $E(i,j)$ into two bundles $C_1$ and $C_2$ such that she is $\efx$-happy with both bundles. We call this the \emph{$j$-cut configuration} between agents $i$ and $j$. Note that such a configuration can be computed in polynomial time for additive valuations.

The idea is to find a partial $\efx$ orientation $X$ that satisfies a set of six useful properties (listed in Section~\ref{sec:definition-4}). These properties pave a simple way for us to extend $X$ to a complete allocation while maintaining $\efx$ guarantees. One of them ensures that for any $i \in S$, $j\in T$, either (i) no item from $E(i, j)$ is allocated, (ii) exactly one of $C_1$ or $C_2$ is allocated to either $i$ or $j$, or (iii) both $C_1$ and $C_2$ are allocated to $i$ and $j$, such that one receives $C_1$ and other $C_2$ in $X$.

 To create such a partial orientation, we start with a greedy algorithm that allocates a set of items to each agent such that every agent in $T$ is non-envied. This initial step ensures that the set $X_i$ of items allocated to any agent $i \in [n]$ is such that $X_i \subseteq E(i,j)$ for some $j \in [n]$. Moreover, this partial orientation is $\efx$ wherein any vertex that has envy is certainly non-envied. We then try to orient unallocated items incident to a non-envied vertex to her while maintaining partial $\efx$ until the only unallocated edges are between a non-envied and an envied vertex. 

 Once we find a partial $\efx$ orientation with certain useful properties, we can go two ways (of our choice) to have a complete allocation. We can either compute, in polynomial time, (i) an orientation where at least $n/2$ agents are $\efx$, and the remaining agents are $1/2$-$\efx$ (Theorem~\ref{thm:1/2-efx}), or (ii) an exact $\efx$ allocation (Theorem~\ref{thm:multi-bipartite-main}). For the latter, we know that since $\efx$ orientations do not necessarily exist for bipartite multi-graphs, we have to allocate the remaining edges to a vertex other than their endpoints, which will create a wasteful (albeit an $\efx$) allocation.

Before finding a complete allocation, we have ensured, by one of our properties, that both non-envied and envied vertices are satisfied enough with what is allocated to them that they will not envy if we give all the unallocated edges adjacent to them to a specific third vertex. We, therefore, safely allocate the remaining items from the set $E(i,j)$ to a specific agent $k \neq i,j$, and finally compute an $\efx$ allocation.

\subsection{Further Related Work} \label{sec:related}

For the notion of $\efx$, \cite{plaut2020almost} proved its existence for two agents with monotone valuations. For three agents, a series of works proved the existence of $\efx$ allocations when agents have additive valuations (\cite{chaudhury2020efx}), \emph{nice-cancelable} valuations (\cite{berger2021almost}), and finally when two agents have monotone valuations and one has an \emph{$\mms$-feasible} valuation (\cite{akrami2025efx}). $\efx$ allocations exist when agents have identical (\cite{plaut2020almost}), binary (\cite{halpern2020fair}), or bi-valued (\cite{amanatidis2021maximum}) valuations.
The study of several approximations (\cite{chaudhury2021little,amanatidis2020multiple,chan2019maximin,farhadi2021almost}) and relaxations (\cite{ef2x,amanatidis2021maximum,aram22,berger2021almost,Caragiannis2023,caragiannis2019envy,chasmjahan23,mahara2021extension}) of $\efx$ have become an important line of research in discrete fair division.

Another relaxation of envy-freeness proposed in discrete fair division literature is that of {\em envy-freeness up to some item} ($\ef$1), introduced by \cite{budish2011combinatorial}. It requires that each agent prefers her own bundle to the bundle of any other agent after removing some item from the latter. $\ef$1 allocations always exist and can be computed efficiently (\cite{lipton2004approximately}). \emph{Epistemic} $\efx$ is another relaxation of $\efx$ that was recently introduced by \cite{Caragiannis2023}, where they showed its existence and polynomial-time tractability for additive valuations. A recent work \cite{akrami2025epistemic} then established the existence of epistemic $\efx$ allocations for monotone valuations.

Following the work of \cite{christodoulou2023fair}, recent works have started to focus on $\efx$ and $\ef1$ orientations and allocations on graph setting. \cite{landscape24} studies the mixed manna setting with both goods and chores and proves that determining the existence of $\efx$ orientations on simple graphs for agents with additive valuations is $\np$-complete and provides certain special cases like trees, stars, and paths where it is tractable. \cite{zeng2024structure} relates the existence of $\efx$ orientations and the chromatic number of the graph. Recently, \cite{deligkas2024ef1} showed that $\ef1$ orientations always exist for monotone valuations and can be computed in pseudo-polynomial time.

\emph{Proportionality} (\cite{dubins1961cut,steinhaus1948problem}) and {\em maximin fair share} (\cite{budish2011combinatorial}) are two other important fairness notions; we refer the readers to
an excellent recent survey by~\cite{survey2023} (and references within) for further discussion.

Research on fair division has also extended to the setting of \emph{indivisible chores}, where agents incur \emph{costs} rather than receive utilities. In this setting, substantial progress has been made toward understanding approximate $\efx$ fairness. The early work of \cite{ZW24} achieved an $O(n^2)$-approximation for additive cost functions, and subsequent improvements by \cite{GMQ25} tightened this to a constant $4$-$\efx$ guarantee for any number of agents. Parallel developments by \cite{CS24} and \cite{afshinmehr2024approximateefxexacttefx} established the existence of $2$-$\efx$ allocations for three agents, and \cite{garg2025existence2efxallocationschores} later generalized this result to arbitrary numbers of agents. Despite these advances, exact $\efx$ allocations remain known only in highly restricted settings. In fact, \cite{CS24} demonstrated that for monotone cost functions, some instances admit no $\efx$ allocation at all. Meanwhile, a long-standing open question on the compatibility of fairness and efficiency—whether an allocation can be both $\ef1$ and $\po$—was recently answered in the affirmative by \cite{mahara2025existencefairefficientallocation}.

\paragraph{Roadmap:} We begin by introducing notation and definitions in Section~\ref{sec:prelim}. In Section~\ref{sec:orientations}, we develop an exhaustive list of scenarios where $\efx$ orientations exist depending on two parameters related to multi-graphs for monotone valuations. In Section~\ref{sec:monotone}, we develop a pseudo-polynomial-time algorithm to compute $\efx$ allocations for bipartite multi-graphs with monotone valuations. Finally, in Section~\ref{sec:improve}, we extend our results and develop a pseudo-polynomial-time algorithm to compute $\efx$ allocations for multi-cycles with $\mms$-feasible valuations.
\section{Notation and Definitions}\label{sec:prelim}
For any positive integer $k$, we use $[k]$ to denote the set $\{1, 2, \ldots, k\}$. We consider a set $[m]$ of $m$ goods (items) that needs to be allocated among a set $[n] = \{1, 2, \ldots, n\}$ of $n$ agents in a fair manner. For ease of notation, we will use $g$ instead of $\{g\}$ for an item $g \in [m]$. 

\begin{definition}
    (Allocations). A partial allocation $X = (X_1, X_2, \ldots, X_n)$ is an ordered tuple of $n$ disjoint subsets of $[m]$, i.e., for every $i, j \in [n]$ we have  $X_i \cap X_j = \emptyset$. Here, $X_i$ denotes the bundle allocated to agent $i \in [n]$ in $X$. We say an allocation $X$ is complete if $\bigcup\limits_{i \in [n]} X_i = [m]$.
\end{definition}

\noindent
\textbf{Valuation Functions and Fairness Notions:} Each agent $i \in[n]$ specifies her preferences using a valuation function $v_i: 2^{[m]} \rightarrow R^{+}$, that assigns a non-negative value to every subset of items. We denote a fair division instance as $\mathcal{I} = \left<[n], [m], \{v_i\}_{i \in [n]}\right>$.

This work focuses on monotone, additive, cancelable, and $\mms$-feasible valuations that can be represented via \emph{multi-graphs}, defined below.

\begin{definition}
    (Monotone, Additive, Cancelable, and $\mms$-feasible Valuations). We say a valuation function $v: 2^{[m]} \rightarrow \mathbb{R}^+$ is monotone if for every $S \subseteq S' \subseteq [m]$, we have $v(S) \leq v(S')$. We say $v$ is \emph{additive} if for every subset $S \subseteq [m]$ of items, we have $v(S) = \sum\limits_{g \in S} v(g)$. 
    
    \noindent
    Next, we say $v$ is cancelable if for any $S, T \subseteq [m]$ and any $g \in [m] \setminus (S \cup T)$, we have
    $$v(S \cup \{g\}) > v(T \cup \{g\}) \implies v(S) > v(T).$$ That is, removing the same good from two bundles would not change the relative preference between the two. Note that the class of cancelable valuations is a strict superclass of additive valuations. 
    
    Finally, we say $v$ is $\mms$-feasible if for any $S \subseteq [m]$ and any partitions $A = (A_1, A_2)$ and $B = (B_1, B_2)$ of $S$, we have $\max(v(B_1), v(B_2)) \ge \min(v(A_1), v(A_2))$. 

    \noindent
    Note that the class of $\mms$-feasible valuations is a strict superclass of cancelable valuations.
\end{definition}

 Recently, \cite{christodoulou2023fair} proposed a valuation class that can be represented by graphs. Here, vertices correspond to agents and edges correspond to items that are valued positively only by the two agents at their endpoints. \citeauthor{christodoulou2023fair} studied simple graphs wherein there is at most one edge between every pair of adjacent vertices. In this work, we focus on a natural extension of these instances to \emph{multi-graphs} where we allow multiple edges between agents. 

\begin{definition}
    (Multi-graph Instances). A fair division instance $\mathcal{I}= \left<[n], [m], \{v_i\}_{i \in [n]}\right>$ on a multi-graph\footnote{A multi-graph can have multiple edges between two vertices.} is represented via a multi-graph $G = (V, E)$ where agents form the set $V = [n]$ of vertices and items form $m$ edges in $E$ with the following structure: for every agent $i \in [n]$ and every subset $S \subseteq [m]$, $v_i(S) = v_i(S \cap E(i))$, where $E(i)$ denotes the set of items incident to agent $i$, i.e., an item $g \in [m]$ has a positive marginal value for agent $i$ if and only if $g$ is incident to $i$. 

\end{definition}

We denote the set of edges between agents $i$ and $j$ by $E(i, j)$. Note that $E(i, j) = E(j, i)$. Also, we use the words `agent' and `vertex' interchangeably, similarly for `item' and `edge'. 

\begin{definition} (Symmetric Instances)
    We say a multi-graph instance is symmetric if the valuations are additive and for any item $g \in E(i,j)$ with $i,j \in[n]$ is identically valued by both $i$ and $j$, i.e., $v_i(g)=v_j(g)$.
\end{definition}

Let us now define the concept of non-wasteful allocations known as \emph{orientations}.

\begin{definition}
    (Partial) Orientation). A partial orientation is a partial allocation where an item $g$ (if allocated) is given to an agent $i$ such that $g$ is incident to $i$ in the given multi-graph. This can be represented by directing the (allocated) edges in the graph towards the vertex receiving the edge.
\end{definition}

We use the popular notion of \emph{envy-freeness up to any good ($\efx$)} as a standard fairness notion in our work. Let us first define the concept of \emph{strong envy} and some useful constructs related to envy.

\begin{definition}
    (Envy and Strong Envy). Given an allocation $X = (X_1, X_2, \ldots, X_n)$, we say $i$ envies $j$ if $v_i(X_j) > v_i(X_i)$, and we say $i$ \emph{strongly envies} $j$ if there exists an item $g \in X_j$ such that  $v_i(X_i)<v_i(X_j \setminus g)$.
\end{definition}

\begin{definition}
    (Envy-Freeness Up to Any Good ($\efx$)). We say an allocation is $\efx$ if there is no strong envy between any pair of agents. Moreover, we say an allocation $X$ is $\alpha$-$\efx$ for an $\alpha \in (0, 1]$ if $v_i(X_i) \geq \alpha \cdot v_i(X_j \setminus g)$ for every $i,j \in [n]$ and $g \in X_j$.
\end{definition}

\begin{definition}
    (EFX-Feasibility). Given a partition $X = (X_1, X_2, \ldots, X_n)$ of items into $n$ bundles, we say bundle $X_k$ is EFX-feasible for agent $i$ if we have $v_i(X_k) \ge \max\limits_{j \in [n]} \max\limits_{g \in X_j} v_i(X_j \setminus g)$.
\end{definition}

We say a bundle containing one item as a ``singleton". Note that no agent strongly envies an agent owning a singleton. Also, since, in an orientation on a given multi-graph $G$, a vertex $i \in [n]$ receives edges that are incident to her, it implies that the set of $i$'s neighbors in $G$ are the only agents that can possibly strongly envy her. This leads to the following observation, which we will use frequently in future sections. 

\begin{observation} \label{obs:orientation_efx}
    A partial orientation is $\efx$ on a multi-graph if and only if no agent strongly envies her neighbor.
\end{observation}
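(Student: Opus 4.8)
The plan is to prove both directions of the equivalence, with the forward direction being immediate from the definition and the reverse direction resting entirely on the structural constraint that an orientation imposes on each agent's bundle.

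The forward direction is trivial: if a partial orientation $X$ is $\efx$, then by definition there is no strong envy between \emph{any} pair of agents, so in particular no agent strongly envies her neighbor. For the reverse direction, I would establish the following key claim and then argue by contrapositive: \emph{in a partial orientation, an agent $i$ can only strongly envy a neighbor}. To see this, suppose $i$ strongly envies some agent $j$, so that there is an item $g \in X_j$ with $v_i(X_i) < v_i(X_j \setminus g)$. Since $v_i(X_i) \geq 0$, this forces $v_i(X_j \setminus g) > 0$, and so by additivity $X_j$ must contain some edge $e$ with $v_i(e) > 0$.

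Now I invoke the two defining properties of the setting. First, because $X$ is an \emph{orientation}, every edge in $X_j$ is incident to $j$; in particular $e$ is incident to $j$. Second, because this is a \emph{multi-graph instance}, the defining equivalence gives $v_i(e) > 0$ if and only if $e$ is incident to $i$; hence $e$ is incident to $i$ as well. Thus $e \in E(i,j)$, which means $i$ and $j$ are adjacent, i.e., $j$ is a neighbor of $i$. This proves the claim, and combining it with the reverse hypothesis finishes the argument: if no agent strongly envies her neighbor, then by the claim no agent strongly envies anyone at all, so $X$ exhibits no strong envy and is therefore $\efx$.

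The only substantive step — and really the crux of the whole observation — is the single line where the ``if and only if'' in the definition of multi-graph instances is used to rule out positive value from non-incident edges; everything else is bookkeeping. I therefore do not anticipate any genuine obstacle, since the statement follows immediately once one notes that an orientation places only incident edges into each bundle, confining all potential strong envy to adjacent vertices.
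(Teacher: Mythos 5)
Your proof is correct and follows essentially the same reasoning the paper uses: the paper justifies this observation by the remark immediately preceding it, namely that in an orientation each bundle contains only edges incident to its owner, so non-neighbors value that bundle at zero and all potential strong envy is confined to adjacent vertices. Your write-up simply makes this explicit (trivial forward direction, plus the incidence/additivity argument for the reverse), so there is no substantive difference.
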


Next, we demonstrate a property of $\efx$ orientations on multi-graphs via the following lemma.

\begin{restatable}{lemma}{lemenviedsingleton}\label{lem:envied-singleton}
    For a multi-graph instance, consider a partial $\efx$ orientation $X$ where a vertex $i$ is envied by one of her neighbors $j$. Then, we must have $X_i \subseteq E(i,j)$. In particular, any vertex is envied by at most one neighbor in any $\efx$ orientation.
\end{restatable}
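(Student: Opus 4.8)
The plan is to prove the lemma in two parts, matching its two assertions. Let $X$ be a partial $\efx$ orientation on the multi-graph, and suppose vertex $i$ is envied by a neighbor $j$, so that $v_j(X_j) < v_j(X_i)$. The first goal is to show $X_i \subseteq E(i,j)$. Since $X$ is an orientation, every edge in $X_i$ is incident to $i$, so $X_i$ decomposes as a disjoint union $X_i = \bigcup_{k} \bigl(X_i \cap E(i,k)\bigr)$ over the neighbors $k$ of $i$. The key point is that $j$ only values edges incident to $j$, so from $j$'s perspective $v_j(X_i) = v_j\bigl(X_i \cap E(i,j)\bigr)$: any edge of $X_i$ that is not shared with $j$ contributes zero to $j$'s valuation of $X_i$.

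First I would use this to argue that $X_i$ cannot contain any edge outside $E(i,j)$. Suppose for contradiction that some edge $g \in X_i$ lies in $E(i,k)$ for a neighbor $k \neq j$ (so $v_j(g) = 0$). Then removing $g$ does not decrease $j$'s value of $i$'s bundle, i.e. $v_j(X_i \setminus g) = v_j(X_i) > v_j(X_j)$. This is precisely strong envy of $i$ by $j$, contradicting the assumption that $X$ is $\efx$ (invoking \Cref{obs:orientation_efx}, which tells us $\efx$ on orientations is exactly the absence of strong envy among neighbors). Hence every edge of $X_i$ lies in $E(i,j)$, giving $X_i \subseteq E(i,j)$.

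For the second assertion---that $i$ is envied by at most one neighbor---I would suppose $i$ is envied by two distinct neighbors $j$ and $j'$. Applying the first part to each envier yields $X_i \subseteq E(i,j)$ and simultaneously $X_i \subseteq E(i,j')$. Since $j \neq j'$ and edges between distinct vertex pairs are distinct, $E(i,j) \cap E(i,j') = \emptyset$, forcing $X_i = \emptyset$. But an empty bundle has value $0$ to everyone, so no agent can strictly envy $i$ (as $v_j(X_i) = 0 \not> v_j(X_j) \geq 0$), contradicting that $j$ envies $i$. Therefore at most one neighbor can envy $i$.

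I do not expect a serious obstacle here; the argument is essentially bookkeeping driven by the structural fact that a vertex only values its incident edges. The one subtlety worth stating carefully is the reduction from ``strong envy'' to the $\efx$ condition: I would make explicit that because $X$ is an orientation, it suffices to rule out strong envy only between neighbors (via \Cref{obs:orientation_efx}), and that an edge $g \notin E(i,j)$ witnesses strong envy precisely because its removal leaves $j$'s perceived value of $X_i$ unchanged. Care is also needed in the degenerate case where $X_i$ is a singleton or empty, but the singleton remark preceding the lemma and the empty-bundle observation above handle these cleanly.
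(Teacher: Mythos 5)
Your proposal is correct and follows essentially the same argument as the paper: an edge $e \in X_i \setminus E(i,j)$ has zero value to $j$, so $v_j(X_i \setminus e) = v_j(X_i) > v_j(X_j)$ is strong envy, contradicting $\efx$. The paper leaves the ``at most one envier'' assertion implicit, whereas you spell it out via disjointness of $E(i,j)$ and $E(i,j')$ and non-negativity of valuations; this is a routine but welcome completion, not a different approach.
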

\begin{proof}
    For contradiction, let us assume that there exists an agent $i$, who is envied by her neighbor agent $j$ such that there exists an $e \in X_i$, where $e \in E(i,k)$ for some $k \neq j$. Note that $v_j(X_j) < v_j(X_i) = v_j(X_i \setminus e)$, contradicting  the fact that $X$ is $\efx$.
\end{proof}

\subsection{Graph Theory Definitions}

Our work characterizes the existence of $\efx$ orientations based on the parameter $q$ (the maximum number of edges between any of agents in $G$) and the diameter of the multi-graph. For every multi-graph instance, We begin by defining some useful notions related to a multi-graph. 
\begin{definition}
    (Skeleton of a Multi-graph). For a multi-graph $G = (V, E)$, we define its skeleton as a graph $G' = (V, E')$  where $G'$ has the same set of vertices, and there is a single edge between two vertices if they share at least one edge in $G$, i.e., $i$ is connected to $j$ in $G'$ if $E(i, j) \neq \emptyset$ in $G$.
\end{definition}

\begin{definition}
    ($d(G)$ and $q$ of a Multi-graph $G$). We define $d(G)$ as the diameter of $G$, which is the length of the longest shortest path in the skeleton of $G$. And, we denote $q = \max_{i,j \in [n]} |E(i,j)|$ to be the maximum number of edges between any pair of agents in $G$.
\end{definition}

\begin{definition}
    (Center of a Multi-graph). For a multi-graph $G$, center $c \in V$ is a member of the set $\mathrm{argmin}_{x \in V} \max_{v \in V}d(x,v)$, where $d(x,v)$ is the distance between $x$ and $v$ in $G$. In this paper, we choose one arbitrarily if we have multiple centers.
\end{definition}

In this work, we focus on bipartite multi-graphs, that we define next.

\begin{definition}
    (Bipartite Multi-graph). A bipartite multi-graph $G = (V,E)$ has a skeleton that is a bipartite graph. We denote $V = S \sqcup T$ with two partitions $S$ and $T$ having no edge between them.
\end{definition}

\begin{definition}
    (Multi-star, Multi-$P_n$, Multi-cycle, and Multi-tree). A multi-H has a skeleton that is an $H$ graph, where $H$ can be a star, a path $P_n$ of length $n - 1$, a cycle, or a tree.
\end{definition}

Note that bipartite multi-graphs is an important class consisting of multi-trees, multi-cycles of even length, and planer multi-graphs with faces of even length, to name a few.

\section{EFX Orientations on Bipartite Multi-graphs} \label{sec:orientations}
 \citeauthor{christodoulou2023fair} shows that $\efx$ orientations may not always exist, even on simple graphs. Therefore, it is not surprising when we show the same on bipartite multi-graphs. In particular, we examine multi-cycles (a special kind of bipartite multi-graphs) and show that even for $q = 2$, $\efx$ orientations may not always exist for four agents; see Figure~\ref{fig:counter-example_c4}. Nonetheless, we identify the correct parameters to characterize the scenarios where $\efx$ orientations are guaranteed to exist for monotone valuations; see Table~\ref{tab:orientation}. 

 To do so, we proceed step by step, carefully considering all possible cases. Initially, we focus on bipartite multi-graphs with diameters, $d(G)$, of small numbers. For $d(G)=1$, bipartite multi-graphs become multi-$P_2$, for which an $\efx$ orientation can easily be achieved using the cut-and-choose protocol. As a warm-up, we show the existence of $\efx$ orientations for multi-trees with $d(G)=2$, i.e., for multi-stars. Here, we do so for $q = 2$, but the same approach can be generalized to any $q$, which we discuss in the next section (in Proposition~\ref{prop:multi-stars}).

In this section, we present counter-examples for various cases, via figures, where we have symmetric instances, and the number on each edge depicts the value of that edge for both endpoints.

\begin{restatable}{proposition}{multistarprop}\label{prop:2}
    $\efx$ orientations exist on multi-stars for $q = 2$.
\end{restatable}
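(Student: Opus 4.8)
The plan is to prove existence of an \efx{} orientation on a multi-star with $q=2$ by leveraging the cut-and-choose idea around the center and then reasoning leaf-by-leaf. Let $c$ denote the center of the star and let the leaves be $\ell_1, \dots, \ell_t$, so that every edge lies in some $E(c, \ell_k)$ with $|E(c,\ell_k)| \le q = 2$. The key structural observation, via \Cref{obs:orientation_efx}, is that an orientation is \efx{} iff no agent strongly envies a neighbor; since every leaf's only neighbor is $c$, the only possible envies are (i) a leaf $\ell_k$ strongly envying $c$, and (ii) $c$ strongly envying some leaf $\ell_k$. So I only need to rule out these two kinds of strong envy.

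First I would handle the leaves. For each leaf $\ell_k$, the edge set $E(c,\ell_k)$ has either one or two edges. Following the \emph{$j$-cut configuration} idea from the technical overview, I would let the \emph{leaf} $\ell_k$ cut $E(c,\ell_k)$ into two bundles $C_1^k, C_2^k$ that are both \efx{}-feasible for $\ell_k$: if $|E(c,\ell_k)| = 1$, take $C_1^k$ to be that single edge and $C_2^k = \emptyset$; if $|E(c,\ell_k)| = 2$, put one edge in each bundle. In either case each bundle is a singleton or empty, so by the remark that ``no agent strongly envies an agent owning a singleton,'' whichever bundle we orient toward $c$ will never be strongly envied by $\ell_k$. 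The orientation rule I propose is: give $\ell_k$ its more preferred of the two bundles (the one maximizing $v_{\ell_k}$) and orient the other bundle's edge(s) to $c$. Because $\ell_k$ keeps a bundle it weakly prefers and $c$ receives at most a single edge from this leaf, $\ell_k$ does not strongly envy $c$.

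The remaining obstacle, which I expect to be the crux, is ruling out that the center $c$ strongly envies a leaf. By \Cref{lem:envied-singleton}, in any \efx{} orientation a vertex envied by a neighbor must have its whole bundle inside the shared edge set; applied here, if $c$ were to strongly envy some leaf $\ell_k$, that leaf's bundle $X_{\ell_k} \subseteq E(c,\ell_k)$ is something $c$ values highly. Since $|E(c,\ell_k)| \le 2$ and we gave $\ell_k$ only one of the two edges (a singleton), $c$ cannot \emph{strongly} envy a singleton — removing that one item from $\ell_k$'s bundle leaves the empty set, which $c$ values at $0 \le v_c(X_c)$. Thus strong envy of a leaf by $c$ is automatically precluded regardless of how much total value $c$ accumulates, precisely because each leaf holds at most one edge. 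I would therefore argue that the orientation produced by the per-leaf rule above is \efx{}: no leaf strongly envies $c$ (singleton argument on $c$'s side / preferred-bundle argument on the leaf's side), and $c$ strongly envies no leaf (each leaf holds a singleton). Finally I would note this construction runs in polynomial time, matching the computational claim, and that the argument uses $q=2$ only to guarantee each leaf's retained bundle is a singleton — foreshadowing the generalization to arbitrary $q$ in \Cref{prop:multi-stars}.
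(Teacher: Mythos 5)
Your proposal is correct and follows essentially the same route as the paper's proof: each leaf keeps its preferred edge from $E(c,\ell_k)$, the remaining edges go to the center, the center cannot strongly envy any leaf because leaves hold singletons, and no leaf envies the center because it kept its weakly preferred edge. The extra framing via cut configurations and the appeal to \Cref{lem:envied-singleton} are harmless but unnecessary; the singleton and preferred-edge arguments you give alongside them are exactly what the paper uses.
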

\begin{proof}
        Let $r$ be the center of the given multi-star instance. Every leaf $i$ chooses the edge she prefers the most from the two edges in $E(i, r)$, and we allocate the remaining edges to agent $r$, hence creating an orientation. By Observation~\ref{obs:orientation_efx}, note that to prove that the allocation is $\efx$, we only need to check if the $\efx$ condition is satisfied between agent $r$ and her neighbors. Agent $r$ does not strongly envy any leaf because every leaf has a singleton edge in her bundle. Moreover, every leaf receives her preferred item from $E(i, r)$, so she will not envy agent $r$. This completes our proof.
\end{proof}

However, an $\efx$ orientation might not always exist on bipartite multi-graphs with $d(G) \geq 2$. We prove it by providing an example in the following theorem; see Figure \ref{fig:counter-example_c4}.

\begin{restatable}{theorem}{multicyclethm}\label{thm:orientation-counter-example-cycle}
     $\efx$ orientations on cyclic bipartite multi-graphs with $d(G) \geq 2$ and any $q \geq 2$ may not exist (even on symmetric instances).
\end{restatable}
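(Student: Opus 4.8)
The plan is to exhibit a single small symmetric counter-example and verify that no orientation of its edges can be $\efx$. By \Cref{obs:orientation_efx}, $\efx$ for an orientation only requires that no agent strongly envies a \emph{neighbor}, so I only need to reason locally along each edge of the skeleton. The cleanest witness is the multi-cycle $C_4$ on four agents with $q=2$ (so $d(G)\geq 3$, since the diameter of a $4$-cycle skeleton is $2$—here I would instead take the cycle long enough, or weight the edges, to force $d(G)\geq 3$) as depicted in Figure~\ref{fig:counter-example_c4}. I would assign the edge values symmetrically so that between each adjacent pair there are two items whose values create a tension: one ``large'' item and one ``small'' item on each edge, chosen so that a single large item is worth more than the small item alone but the two together dominate either in a way that prevents any consistent assignment around the cycle.

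First I would set up the instance explicitly: label the agents $1,2,3,4$ around the cycle, with $E(i,i{+}1)$ (indices mod $4$) consisting of two edges carrying symmetric values. I would pick the weights so that for every agent, keeping her preferred (large) edge on one side while her neighbor takes the large edge on the other side is impossible without creating strong envy somewhere. Concretely, I would argue that in \emph{any} orientation, each agent receives some subset of her two incident bundles $E(i,i{-}1)\cup E(i,i{+}1)$, and I would use \Cref{lem:envied-singleton}: if agent $i$ is envied by neighbor $j$, then $X_i\subseteq E(i,j)$, i.e. $i$ holds only edges shared with $j$. This structural constraint is the key lever—it says an envied agent cannot be hoarding edges from both her neighbors.

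The core of the proof is a case analysis that propagates a contradiction around the cycle. I would proceed by tracking, for each of the four edges, which endpoint receives the ``large'' item of that edge (a binary choice per edge once values are fixed), giving finitely many orientations to inspect up to the cycle's rotational symmetry. For each configuration I would locate an agent who strongly envies a neighbor: the envy arises because the neighbor holds both a large item and at least one more item, so removing one item still leaves a bundle the agent values above her own. The symmetry of the instance lets me collapse the $2^4$ raw cases into a handful of essentially distinct ones, and in each I would point to the specific violated $\efx$ inequality $v_i(X_i) < v_i(X_j\setminus g)$.

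The main obstacle will be calibrating the four symmetric edge-weights so that the contradiction is \emph{forced} in every orientation simultaneously, rather than merely in some—this requires the weights to be tight enough that no ``escape'' assignment exists, yet it is the figure that does most of this work. A secondary subtlety is honoring the hypotheses $d(G)\geq 3$ and $q\geq 2$ together: since a plain $C_4$ has diameter $2$, I would either use a longer even multi-cycle (e.g. $C_6$) to guarantee $d(G)\geq 3$, or note that the stated example in Figure~\ref{fig:counter-example_c4} already encodes the right diameter; I would make the figure and the parameter claims consistent before writing the formal case analysis.
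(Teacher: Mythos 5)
There is a genuine gap: you never actually construct the counter-example, and the construction \emph{is} the theorem. The entire mathematical content here is a concrete weighted instance together with a verification that every orientation of it violates $\efx$; your proposal defers exactly that step (``I would pick the weights so that\dots'', ``it is the figure that does most of this work''). Worse, the one structural template you do commit to---a rotationally symmetric multi-$C_4$ with one ``large'' and one ``small'' item between every adjacent pair---provably fails: give each agent $i$ the large item of $E(i,i{+}1)$ and the small item of $E(i{-}1,i)$ (indices mod $4$). Then every agent holds a bundle worth $L+s$ to her, values each neighbor's bundle at most $L$, and the orientation is envy-free, hence $\efx$. Any valid witness must break this rotational symmetry, and the paper's instance (Figure~\ref{fig:counter-example_c4}) does so essentially: one pair $(c,d)$ shares only two negligible items of value $\delta$, while the other three pairs carry carefully calibrated values $\{10+\frac{1}{2}\epsilon,\,\epsilon\}$, $\{10,\epsilon\}$, $\{10,\epsilon\}$. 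The paper's proof then propagates forced choices around the cycle: $E(a,b)$ must be split (else $c$ is starved and strongly envies $b$), the $10$-edge of $E(b,c)$ must go to $c$, who can then receive nothing else, symmetrically $a$ ends up with a single item, so $d$ must absorb all four of her incident edges, at which point $a$ strongly envies $d$. Note also that your case analysis over the $2^4$ placements of the large items is not exhaustive: the placement of the \emph{small} items matters too (the very first case split in the paper's argument concerns the $\epsilon$-item of $E(a,b)$), and \Cref{lem:envied-singleton}, which you call the key lever, is not actually needed---the argument is a direct chain of $\efx$ inequalities.

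A second, smaller issue is your treatment of the diameter hypothesis. Your worry that a $4$-cycle has diameter $2$ uses the standard definition, but the paper defines $d(G)$ as the length of the \emph{longest path} in the skeleton, under which a multi-$C_4$ already has $d(G)=3$ and satisfies the hypothesis; no longer cycle is needed. Your fallback of ``weighting the edges to force $d(G)\geq 3$'' cannot work in any case, since $d(G)$ is a purely combinatorial quantity independent of the valuations, and if you do substitute a longer cycle, it must have even length for the instance to remain bipartite.
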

\begin{proof}
     We use Figure \ref{fig:counter-example_c4} to represent our counter-example (on multi-cycles with four agents) where the numbers (with $1 \gg \epsilon \gg \delta \approx 0$) on each edge show the value of that edge for both endpoints. We will try to construct an $\efx$ orientation and show that it is not possible.
    
    Without loss of generality, we assume the edge with value $10 + \frac{1}{2} \epsilon$ between agents $a$ and $b$ is allocated to agent $a$. First, let us suppose that the other edge in $E(a,b)$ is also allocated to $a$. For agent $b$ to not strongly envy $a$, the bundle $E(b,c)$ must be fully allocated to $b$. This leaves us no way to allocate items to $c$ so that she does not strongly envy $b$, reaching a contradiction. Therefore, the other edge in $E(a, b)$ must be allocated to $b$. 
    
    Now, consider the edge with value $10$ in $E(b,c)$. If we allocate it to $b$, then $c$ will strongly envy her. Therefore, this edge must be allocated to $c$. Observe that no other edge can be allocated to $c$; otherwise, $b$ will strongly envy $c$. The same argument goes for $a$, and hence, it can only have one item of value $10 + \frac{1}{2} \epsilon$. Therefore, $d$ must receive all her incident edges, making $a$ strongly envy $d$. Therefore, no $\efx$-orientation exists for this instance.

    It is easy to observe that the above counter-example can be extended by adding more edges with value $\delta$ between $c$ and $d$ or adding more vertices to the graph with edges of value $\delta' \approx 0$ to achieve a counter-example for bipartite multi-cyclic graphs with $q \geq 2$, and $d(G) \geq 2$.
\end{proof}

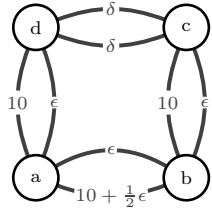
\begin{figure}
    \centering
    \begin{tikzpicture}
    \Vertex[label = a, x = 2, y = 0, color = white]{1}
    \Vertex[label = b, x = 4, y = 0, color = white]{2}
    \Vertex[label = c, x = 4, y = 2, color = white]{3}
    \Vertex[label = d, x = 2, y = 2, color = white]{4}
    
    \Edge[label = $\epsilon$, bend=30](1)(2)
    \Edge[label = $10 + \frac{1}{2} \epsilon$, bend=20](2)(1)
    \Edge[label = $10$, bend=20](2)(3)
    \Edge[label = $\epsilon$, bend=20](3)(2)
    \Edge[label = $10$, bend=20](1)(4)
    \Edge[label = $\epsilon$, bend=20](4)(1)
    \Edge[label = $\delta$, bend=20](3)(4)
    \Edge[label = $\delta$, bend=20](4)(3)
    
    \end{tikzpicture}
    
    \caption{Counter-example for Theorem~\ref{thm:orientation-counter-example-cycle}.}
    \label{fig:counter-example_c4}
\end{figure}

Since $\efx$ orientations do not exist even on bipartite multi-cyclic graphs with small diameters, we examine the existence of $\efx$ orientation on multi-trees as an important subset of bipartite multi-graphs. We show $\efx$ orientations exist on multi-trees with $d(G) \leq 4$ and $q = 2$.

\begin{restatable}{theorem}{orietationdiameterexistthm}\label{thm:orientation-P5}
        $\efx$ orientations exist for monotone valuations and can be computed in polynomial time on multi-trees with $d(G)\leq 4$ and $q = 2$.
\end{restatable}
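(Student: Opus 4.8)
The plan is to root the tree at its center $c$. Since $d(G)\le 4$ the radius is at most two, so every other vertex is either a \emph{branch} adjacent to $c$ (distance one) or a \emph{leaf} adjacent to a branch (distance two). Because the skeleton is a tree, Observation~\ref{obs:orientation_efx} lets me restrict attention to strong envy along the two types of edges only: leaf--branch and branch--center. The leaf--branch edges behave like a multi-star rooted at the branch and can be dealt with in the spirit of Proposition~\ref{prop:2} (a leaf taking its favorite incident edge). The genuine difficulty is the branch--center layer: $c$ is the unique vertex that contests edges with \emph{several} non-leaf agents simultaneously, so it is the bottleneck, and the whole theorem reduces to orienting the edges $E(b,c)$ for all branches $b$ so that $c$ neither strongly envies, nor is strongly envied by, any branch.

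Next I would record, for each branch $b$, the constant-size menu of ``modes'' by which its subtree can be configured (this is where $q=2$ is used: $b$ shares at most two edges with $c$ and with each leaf, so there are only $O(1)$ possibilities). The useful modes are: (i) $b$ keeps its single most valuable incident edge as a \emph{clean singleton}, pushing every other edge either \emph{down} to its leaves (letting a leaf take \emph{both} of its edges) or \emph{up} to $c$; (ii) $b$ cedes all of $E(b,c)$ to $c$ and retains only leaf leftovers; and (iii) $b$ keeps a center edge \emph{together with} leaf items. The key lever, absent one level deeper and hence exactly what separates $d(G)\le 4$ from the $d(G)>4$ counterexamples in Table~\ref{tab:orientation}, is that a leaf can be made to take \emph{both} its edges, stripping $b$ of hidden items; this turns mode (iii) into mode (i). In modes (i) and (ii) the part of $X_b$ that $c$ values is either empty or a single item, so $c$ cannot strongly envy $b$ regardless of $c$'s own value (recall that no agent strongly envies the owner of a singleton, and $c$ places zero value on leaf edges); I would prove a lemma to this effect. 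Mode (iii), in contrast, only forces the inequality $v_c(X_c)\ge v_c\!\big(X_b\cap E(b,c)\big)$, i.e. it \emph{demands} that $c$ be rich.

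The heart of the proof, and the step I expect to be the main obstacle, is then a ``star'' coordination problem at $c$ with a circular dependency: a branch stuck in mode (iii) must retain a valuable center edge (to avoid envying $c$) while carrying leaf items (forcing $c$ to be rich), yet $c$'s value can only come from edges ceded by \emph{other} branches. I would resolve this by a case analysis on each branch's two center-edge values against its leaves' top edges, crucially allowing \emph{asymmetric} assignments: for instance, route one branch's large edge to $c$ so that $X_c$ becomes (effectively) a singleton or a high value dominating all outstanding mode-(iii) demands, while letting the remaining branches hold their own edges. Two subtleties must be handled with care. First, the EFX single-item-removal ``escape'': a branch can tolerate $c$ holding a valuable edge of it precisely when $X_c$ contains \emph{no} item invisible to that branch, so that removing one edge drops $c$'s perceived value below the branch's own bundle --- this is legitimate only because, at diameter four, $c$ contests non-leaf agents at distance one alone. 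Second, I must verify that the per-branch contributions to $v_c(X_c)$ always meet the (finitely many) demands, exploiting that with $q=2$ a demanding branch necessarily cedes its second center edge, so demands and contributions are drawn from the same two-edge budgets and balance out (the tight, tie-saturated cases being the delicate ones). Finally, since each branch admits only $O(1)$ modes and every comparison is an additive sum over at most two edges, the case analysis is constructive and the consistent global choice can be found by a bounded search, yielding the claimed polynomial-time algorithm.
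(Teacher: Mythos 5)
Your decomposition (center $c$, branches at depth one, leaves at depth two), your mode taxonomy, and your identification of the invisible-item subtlety (once $X_c$ contains an edge that branch $b$ cannot see, $b$'s \efx{} condition against $c$ hardens to $v_b(X_b)\ge v_b\bigl(X_c\cap E(b,c)\bigr)$ with no one-item escape) are all sound, and they are indeed the real difficulties. But the proof has a genuine gap at precisely the step you flag as the main obstacle: the coordination at $c$ is never carried out, and the principle you substitute for it---``demands and contributions are drawn from the same two-edge budgets and balance out''---is false. A mode-(iii) branch keeps the center edge \emph{it} prefers, so it demands $v_c(X_c)\ge v_c(e_{\mathrm{kept}})$ while contributing only $v_c(e_{\mathrm{ceded}})$; in a symmetric instance both endpoints prefer the same edge, so every demanding branch can simultaneously demand strictly more than it contributes. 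Worse, the accounting framework itself is unsound, because a branch's viability against $c$ depends on the \emph{support} of $X_c$, not on the scalar $v_c(X_c)$. Concretely, take $c$ with two branches $b_1,b_2$, each sharing center edges of values $\{10,3\}$ with $c$ and each having one leaf with edges $\{12,0\}$ (all values symmetric). Under your rule that $X_c$ will contain items invisible to each branch, one can enumerate every per-branch mode that keeps $b$, its leaf, and their mutual constraints satisfied: each such mode demands at least $10$ from $c$ but contributes at most $3$, so the maximum total contribution is $6<10$ and your bookkeeping declares the instance infeasible. Yet an \efx{} orientation exists: let $b_1$ keep \emph{both} its center edges ($13$), its leaf take both leaf edges; let $b_2$ keep its $3$-edge and its leaf leftover $0$-edge, its leaf take the $12$-edge, and let $c$ take only the $10$-edge of $E(b_2,c)$. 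Then $X_c$ is a singleton inside $E(b_2,c)$, so $b_2$ cannot strongly envy $c$ despite $v_{b_2}(X_{b_2})=3<10$, and $c$'s escape against $X_{b_1}=E(b_1,c)$ closes the rest. Your modes, demands, and ``bounded search'' never produce this allocation. (Separately, even granting the framework, a search over global mode assignments is exponential in the number of branches, so the polynomial-time claim also does not follow as stated.)

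For contrast, the paper resolves exactly this coordination by strong induction on the number of depth-two vertices, with a strengthened hypothesis: (1) for every \emph{envied} depth-one vertex $i$, the set $E(c,i)$ lies entirely with $i$ or entirely with $c$, and (2) envied depth-one vertices do not envy $c$. Branches are then handled one subtree at a time, and your problematic mode-(iii) situation (branch $i$ envied by $c$ with $v_i(E(c,i))<v_i(f_i)$) is repaired \emph{locally}: $i$ retreats to the singleton $f_i$ and all of $E(c,i)$ is reoriented to $c$. The inequality that makes this affordable comes for free from the envy relation---$c$ envies $i$ means $v_c(E(c,i))>v_c(X_c)$, so $c$'s new bundle dominates everything it previously tolerated---and the unique vertex $h$ that may have envied $c$ (necessarily non-envied, by property (2) plus Observation~\ref{obs:orientation_efx}) is fixed by reversing $E(c,h)$. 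If you want to keep your one-shot, mode-based framing, you would need a lemma playing this role: an exchange or augmenting argument showing a consistent, support-aware global assignment always exists. As written, the existence claim at the heart of Theorem~\ref{thm:orientation-P5} is asserted rather than proved.
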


\begin{proof}
    Let us denote the center of $G$ as $c$ and let it be rooted at $c$. We proceed via strong induction on the number of agents at depth $2$ in $G$.  We begin by making the induction statement stronger and prove that $\efx$ orientations exist on multi-trees with $d(G) \leq 4$ and $q=2$ while satisfying the following two properties:

    \begin{enumerate}
        \item For any envied agent $i$ at depth one, the set $E(c, i)$ is completely allocated to either $i$ or $c$. 
        \item For every envied agent $i$ at depth one, $i$ does not envy $c$.
    \end{enumerate}

    We consider multi-stars for the base case of the induction where there are zero agents at depth 2. Let $c$ pick her most valuable item. Then, we allocate all the remaining items incident to $c$ to their other endpoint. In this orientation, properties (1) and (2) are satisfied. Also, the orientation is $\efx$ because $c$ owns a single item, so no one strongly envies her, and $c$ does not strongly envy anyone because she receives her most preferred item. 
    
    Now, we move to the induction step. For every vertex $t$ at depth one in $G$, let $f_t$ be her most valued item in the set $\bigcup\limits_{j \text{ is a child of } t \ \text{in} \ G} E(t, j)$. For the induction step, consider an arbitrary vertex $i$ at depth one and remove all her children from $G$. According to the induction hypothesis, an $\efx$ orientation $X$ exists for the remaining multi-tree that satisfies properties (1) and (2). Now, we add the removed vertices back and show how we extend $X$ to an $\efx$ orientation for $G$ (by allocating the items in $\bigcup\limits_{j \text{ is a child of } i \ \text{in} \ G} E(i, j)$). We distinguish between three possible cases:

    \begin{itemize}
        \item \textbf{$i$ is non-envied in $X$:} Here, we let every child $j$ of $i$ choose her preferred item in $E(i, j)$ and allocate the other edge to agent $i$. Agent $i$ does not strongly envy her children because they own singletons. This gives an $\efx$ orientation for $G$. Note that agent $i$ remains non-envied, and using the induction hypothesis, we satisfy properties (1) and (2) here as well.
        
        \item \textbf{$i$ is envied in $X$ and $v_i(E(c, i)) \geq v_i(f_i)$:} Here, $i$ is envied by $c$ and by property (1), bundle $E(c, i)$ is allocated to $i$ in $X$. For every child $j$ of $i$, allocate the bundle $E(i,j)$ to agent $j$. Agent $i$ does not strongly envy her children because she receives $E(c, i)$, which values at least as much as $f_i$. This gives an $\efx$ orientation for $G$. Agent $i$ is envied, but we didn't reorient the edges in $X$; therefore, properties (1) and (2) are still satisfied using the induction hypothesis.

        \item \textbf{$i$ is envied in $X$ and $v_i(E(c, i)) < v_i(f_i)$:} Here, $i$ is envied by $c$ and by property (1), bundle $E(c, i)$ is allocated to $i$ in $X$. Assume $f_i \in E(i, j_0)$, where $j_0$ is a child of $i$ in $G$. Allocate $f_i$ to $i$ and give the other item in $E(i, j_0)$ to $j_0$. For every other child $j \neq j_0$ of $i$, allocate the bundle $E(i, j)$ to $j$. Reorient the edges in $E(c, i)$ and allocate all of them to $c$. 
        
        Now, if agent $c$ was non-envied in $X$, our resulting orientation will be $\efx$ too. Also, one can observe that both properties are satisfied as well. Thus, we only consider the case where $c$ was envied in $X$. By property (2), she can only be envied by a non-envied vertex at depth one. Let $h$ be the vertex that was envying $c$ in $X$. Reverse the orientation of edges between $c$ and $h$. Now, $i$ does not envy $c$. Also, $c$ does not envy $h$ because she received a more valuable bundle than her previous one. Therefore, $h$ is still non-envied. One can observe that the resulting orientation is again $\efx$, and both properties are satisfied.
    \end{itemize}

    Overall, we can obtain an $\efx$ orientation in $G$. Note that since our inductive step can be performed in polynomial time, we can compute an $\efx$ orientation for $G$ in polynomial time as well.
\end{proof}

Unfortunately, $\efx$ orientations may not exist on multi-trees with a greater diameter or higher $q$, as shown in the following theorems.

\begin{restatable}{theorem}{multipaththm}\label{thm:orientation-counter-example-bigq}
For multi-trees with $d(G) \geq 3$, $\efx$ orientations may not exist for $q \geq 3$ (even on symmetric instances).    
\end{restatable}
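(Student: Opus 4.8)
The statement is a non-existence result, so the plan is to exhibit one explicit instance and argue that it admits no \efx{} orientation. I would present a symmetric multi-tree $G$ of diameter three (hence a double-star on the skeleton, with the two internal vertices carrying the parallel edges and $q=3$) and draw it as a figure with the edge values marked, exactly in the style of the proof of Theorem~\ref{thm:orientation-counter-example-cycle}. I would choose the values with a clear separation of scales (a few ``heavy'' edges of comparable large value, together with some negligible edges of value $\approx 0$), so that deciding whether an orientation exists reduces to a finite and transparent enumeration over the orientations of the heavy edges.

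The verification relies on two tools already in hand. First, Observation~\ref{obs:orientation_efx} lets me check the \efx{} condition only across neighboring pairs. Second, I would use Lemma~\ref{lem:envied-singleton} in its sharpest form: if an internal vertex $i$ is even \emph{weakly} envied by a neighbor $j$, then $X_i \subseteq E(i,j)$, and no vertex is envied by two neighbors. The key consequence I would exploit is a \textbf{rigidity (``exposure'') phenomenon}: an internal vertex that simultaneously keeps an edge from one incident pair and an edge from a \emph{different} incident pair cannot be even weakly envied by either of the two corresponding neighbors, which is strictly stronger than the \efx{} requirement. I would then run a forced-orientation argument mirroring Theorem~\ref{thm:orientation-counter-example-cycle}: fixing the orientation of one heavy edge without loss of generality, each further heavy edge is forced to a unique endpoint to avoid strong envy, and this chain of forced choices drives the instance into a configuration where some internal vertex is simultaneously \emph{exposed} (holding edges from two pairs) and \emph{envied} by a neighbor, contradicting Lemma~\ref{lem:envied-singleton} — or where a leaf is forced to collect all its incident heavy edges and thereby create strong envy at its center. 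Either horn yields the contradiction.

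The main obstacle is the choice of values. The difficulty is that, if the instance is too ``loose'', the two centers can always \emph{escape} the shared conflict: a center can keep a single heavy edge as a safe singleton (which no one strongly envies), or the centers can split three comparable parallel edges two-to-one, and such orientations turn out to be \efx{}. The point of requiring $q\ge 3$ is precisely that the extra multiplicity supplies the additional rigidity — via the exposure consequence above — needed to close off \emph{every} escape within a tree, thereby compensating for the absence of a cycle (it is a cycle that provided the final, closing constraint in Theorem~\ref{thm:orientation-counter-example-cycle}). Calibrating the scales so that in every one of the finitely many heavy-edge orientations some center is forced into exposure-with-envy is the delicate part, and I would discharge it by a short explicit case enumeration.

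Finally, I would upgrade the single witnessed instance to the full range $d(G)\ge 3$, $q\ge 3$ by the same padding device used at the end of the proof of Theorem~\ref{thm:orientation-counter-example-cycle}: inserting additional parallel edges of value $\delta\approx 0$ between an adjacent pair raises $q$ to any desired value, and attaching further vertices and edges of value $\delta'\approx 0$ increases the diameter. Neither operation disturbs the forced-chain argument, since a negligible edge can always be oriented toward a non-envied endpoint without introducing strong envy.
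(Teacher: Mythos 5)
Your plan identifies the right mechanism --- a forced chain of orientations ending with an internal vertex that holds edges from two different pairs yet is strongly envied --- and this is indeed exactly how the paper's counter-example works. But the theorem is an existence-of-a-counter-example statement: its entire proof \emph{is} the explicit instance together with the finite verification. Your proposal defers precisely that part (``calibrating the scales \dots is the delicate part, and I would discharge it by a short explicit case enumeration''), so nothing is actually proven. This is not a presentational quibble: the calibration is where all the content lies, and your one concrete structural commitment points the wrong way. If, as your phrasing suggests, the $q=3$ parallel edges sit \emph{between the two internal vertices} of the double star, then no choice of values can work: orient each leaf edge to its leaf (a leaf then values no other agent's bundle, and a singleton is never strongly envied), and split the three middle edges between the two centers by cut-and-choose (Lemma~\ref{lem:conf-poly} style); the result is always an \efx{} orientation. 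The multiplicity must instead be placed on the \emph{center--leaf} pairs, with a single calibrated edge between the centers.

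Concretely, the paper takes a multi-$P_4$ on vertices $a,b,c,d$ with $E(a,b)$ and $E(c,d)$ each consisting of three edges of values $1, 1+\epsilon, 1+\epsilon$, and a single middle edge $(b,c)$ of value $2+\tfrac{3}{2}\epsilon$. The calibration does the work: whichever center (say $b$) receives the middle edge must also take some edge $e$ of value at least $1$ from $E(a,b)$ (else $b$ strongly envies $a$), and then $X_b \setminus e$ still contains the middle edge, worth $2+\tfrac{3}{2}\epsilon$ to $c$, while $c$ can hold at most $2+\epsilon$ from $E(c,d)$ without $d$ strongly envying her --- so $c$ strongly envies $b$. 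Note also that the paper does \emph{not} obtain $q \ge 4$ by padding with $\delta$-edges; it gives a second explicit family ($q$ unit-value edges on each outer pair, middle edge of value $\lceil \frac{q}{2}\rceil + \epsilon$) and rereuns the same forced-chain argument. Your padding idea for raising $q$ can be made to work (the $\delta$'s are dominated by the $\Theta(\epsilon)$ gaps in the inequalities), but it too requires rerunning the verification on the padded instance rather than the one-line appeal you give, since adding items can in principle create new \efx{} orientations; the paper reserves $\delta$-padding only for increasing $n$ and the diameter.
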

\begin{proof}
        We divide the counter-examples into two cases and show the non-existence of $\efx$ orientations even for four agents with different $q$ values (they are, in fact, multi-$P_4$ instances). We use Figure \ref{fig:counter-example-orientation-n} to represent our counter-examples. 
    
    \textbf{Case 1. $q = 3$:} Consider the instance in Figure \ref{fig:counter-example-orientation-n} (a) and assume, for contradiction, that it admits an $\efx$ orientation. Without loss of generality, let us suppose that agent $b$ receives the edge $(b, c)$. Agent $a$ cannot receive the set $E(a, b)$ completely; otherwise, $b$ will strongly envy $a$. Therefore, agent $b$ receives at least one edge other than $(b, c)$. Let us call this edge as $e$. The most valuable bundle that $c$ can receive such that $d$ does not strongly envy her is valued $2 + \epsilon$ to her. Therefore, we have,
    \begin{align*}
        v_c(X_b \backslash e) = v_c(E(b,c)) = 2 + \frac{3}{2} \epsilon > 2 + \epsilon \geq v_c(X_c),
    \end{align*}
    which is a contradiction.

    \textbf{Case 2. $q \geq 4$:} Consider the instance in Figure \ref{fig:counter-example-orientation-n} (b) where there are $q$-many edges with value $1$ for both endpoints between agents $a$ and $b$ and agents $c$ and $d$. We assume, for contradiction, that this instance admits an $\efx$ orientation. Again, without loss of generality, let agent $b$ receive the edge $(b,c)$. Agent $a$ cannot receive the set $E(a,b)$ completely; otherwise, $b$ will strongly envy $a$. Therefore, agent $b$ receives at least one edge other than $(b,c)$. Let us call this edge as $e$. The most valuable bundle that $c$ can receive such that $d$ does not strongly envy her is valued $\lceil \frac{q}{2} \rceil$ to her. Therefore, we have,
    \begin{align*}
        v_c(X_b \backslash e) = v_c(E(b,c)) = \lceil \frac{q}{2} \rceil + \epsilon > \lceil \frac{q}{2} \rceil \geq v_c(X_c),
    \end{align*}
    which is a contradiction.

    One can extend these counter-examples by adding nodes to the graph with edges with value $\delta \approx 0$ to achieve a counter-example for any $q \ge 3$ and $d(G) \geq 3$.
\end{proof}

\begin{figure}
    \centering
    \begin{subfigure}{0.51\textwidth}
        \centering
        \begin{tikzpicture}
            \Vertex[label = a, x = 2, y = 2, color = white]{1}
            \Vertex[label = b, x = 4, y = 0, color = white]{2}
            \Vertex[label = c, x = 6, y = 0, color = white]{3}
            \Vertex[label = d, x = 8, y = 2, color = white]{4}
            
            \Edge[label = $1$, bend=40](1)(2)
            \Edge[label = $1 + \epsilon$, bend=10](1)(2)
            \Edge[label = $1 + \epsilon$, bend=-20](1)(2)
    
            \Edge[label = $2 + \frac{3}{2} \epsilon$](2)(3)
            
            \Edge[label = $1$, bend=40](4)(3)
            \Edge[label = $1 + \epsilon$, bend=10](4)(3)
            \Edge[label = $1 + \epsilon$, bend=-20](4)(3)
        \end{tikzpicture}
    \caption{}
    \end{subfigure}
    \hspace{0pt} 
    \begin{subfigure}{0.51\textwidth}
    \centering
        \begin{tikzpicture}
            \Vertex[label = a, x = 2, y = 2, color = white]{1}
            \Vertex[label = b, x = 4, y = 0, color = white]{2}
            \Vertex[label = c, x = 6, y = 0, color = white]{3}
            \Vertex[label = d, x = 8, y = 2, color = white]{4}
    
            \path (1) -- (2) node [black, font=\normalsize, midway, sloped, rotate=90] {$\dots$};
            \path (3) -- (4) node [black, font=\normalsize, midway, sloped, rotate=270] {$\dots$};
            
            \Edge[label = $1$, bend=50](1)(2)
            \Edge[label = $1$, bend=20](1)(2)
            \Edge[label = $1$, bend=-30](1)(2)
            \Edge[label = $1$, bend=-60](1)(2)
    
            \Edge[label = $\lceil \frac{q}{2} \rceil + \epsilon $](2)(3)
            
            \Edge[label = $1$, bend=50](4)(3)
            \Edge[label = $1$, bend=20](4)(3)
            \Edge[label = $1$, bend=-30](4)(3)
            \Edge[label = $1$, bend=-60](4)(3)
        \end{tikzpicture}
    \caption{}
    \end{subfigure}

    \caption{Counter-example for Theorem~\ref{thm:orientation-counter-example-bigq}.} 
    \label{fig:counter-example-orientation-n}
\end{figure}
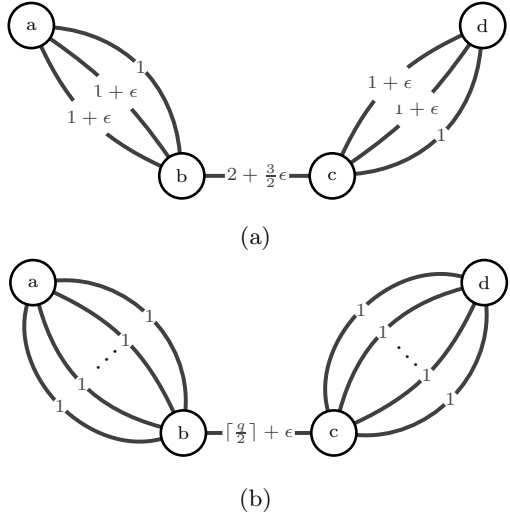

\begin{theorem} \label{thm:orientation-counter-example-q2}
For multi-trees with $d(G) \geq 5$, $\efx$ orientations may not exist even for $q = 2$ (even on symmetric instances).
\end{theorem}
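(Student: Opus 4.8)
The plan is to exhibit an explicit symmetric counter-example on a multi-$P_6$, i.e., a path $a-b-c-d-e-f$ on six agents in which every two adjacent agents share exactly $q=2$ edges. Such an instance is a multi-tree whose skeleton $P_6$ has diameter $5$, so it is the smallest diameter not covered by the positive result of Theorem~\ref{thm:orientation-P5}; establishing non-existence here shows that the bound $d(G)\le 4$ in Theorem~\ref{thm:orientation-P5} is tight for $q=2$. I would make the instance symmetric about the central edge-set $E(c,d)$ (so $a\leftrightarrow f$, $b\leftrightarrow e$, $c\leftrightarrow d$) and choose weights with $1\gg\epsilon\gg\delta\approx 0$, placing the most valuable, heavily contested edges in the middle and using the two outer tails $a-b-c$ and $d-e-f$ as gadgets that cap how much value agents $c$ and $d$ can secure from their outer neighbours --- exactly the ``squeeze'' used in Theorem~\ref{thm:orientation-counter-example-cycle} and Theorem~\ref{thm:orientation-counter-example-bigq}, but now created by path length rather than by a cycle closure or by large $q$.

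To rule out an $\efx$ orientation I would argue by the same forced-propagation technique as in the earlier theorems, repeatedly invoking Observation~\ref{obs:orientation_efx} (only neighbours can strongly envy) and Lemma~\ref{lem:envied-singleton} (an envied vertex's whole bundle lies in $E(i,j)$ for its unique envious neighbour). First I would fix, without loss of generality, the orientation of one central edge, say the larger edge of $E(c,d)$ to $c$. A leaf such as $a$ must receive at least one edge of $E(a,b)$, since otherwise $a$ owns nothing and strongly envies $b$; pushing this constraint inward, the weights are chosen so that, to prevent $b$ from strongly envying $c$ (resp.\ $e$ from $d$), agent $c$ (resp.\ $d$) is forced to keep the large edge of $E(b,c)$ (resp.\ $E(d,e)$) and can obtain nothing more from that side. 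This caps $v_c(X_c)$ and $v_d(X_d)$ strictly below the value $c$ and $d$ would need in order to both tolerate a legal split of the two central edges $E(c,d)$; every one of the four orientations of $E(c,d)$ then produces strong envy either across $E(c,d)$ or back along a tail, giving the contradiction. Finally, as in the previous theorems, the example is padded with extra vertices/edges of value $\delta'\approx 0$ to yield counter-examples for all $n\ge 6$ and all $d(G)\ge 5$.

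The main obstacle is purely in the weight design: because $q=2$, each adjacent pair admits four orientations of its two shared edges, so I must pick numbers so that (i) the outer gadgets genuinely force $c$ and $d$ into the tight bundles claimed above under \emph{every} $\efx$-compatible split, and (ii) the resulting cap on $v_c(X_c),v_d(X_d)$ falls strictly below the central demand in all split cases simultaneously. Producing a single consistent weighting that makes all these strict inequalities hold --- and verifying the short but genuinely exhaustive case analysis over the orientations of $E(a,b)$, $E(b,c)$, $E(c,d)$ and their mirror images --- is the real content, though the symmetry of the instance roughly halves the casework. I would also sanity-check that the same weights restricted to the length-$4$ sub-path \emph{do} admit an $\efx$ orientation, confirming that diameter $5$ is essential and consistent with Theorem~\ref{thm:orientation-P5}.
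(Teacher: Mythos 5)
Your scaffolding (a symmetric multi-$P_6$, forced propagation from the leaves, padding with $\delta'$-edges for larger $n$ and $d(G)$) matches the paper, but your weight-design principle is inverted, and the inversion is fatal: putting the heaviest, contested edges in the middle makes $\efx$ orientations \emph{exist} rather than fail. Concretely, suppose each central edge is worth to $c$ at least $v_c(E(b,c))$ and to $d$ at least $v_d(E(d,e))$ --- which is what ``heavy enough to punish any split'' amounts to. Then the following orientation is $\efx$: $a$ takes her favourite edge of $E(a,b)$; $b$ takes the other edge of $E(a,b)$ together with \emph{all} of $E(b,c)$; $c$ takes one central edge; $d$ takes the other; mirror on the right. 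Indeed, $a$, $c$, $d$, $f$ hold singletons, so nobody strongly envies them; $a$ does not strongly envy $b$ since removing any $E(b,c)$-edge from $X_b$ leaves value at most $v_a(X_a)$; $b$ values $X_c$ at zero; and $c$ does not strongly envy $b$ because $v_c(X_c)\ge v_c(E(b,c))\ge\max_{g\in X_b}v_c(X_b\setminus g)$. This single orientation defeats both halves of your case analysis at once: the one-each split of $E(c,d)$ creates no strong envy across the center (singletons), and the strong envy ``back along a tail'' that you rely on requires $b$ to \emph{envy} $c$'s tail holdings --- but here $c$'s tail holdings are empty. The tension is structural, not a matter of tuning: your tail gadget can force ``the large edge of $E(b,c)$ must go to $c$, and $b$ then envies $c$'' only when $c$'s outside option is negligible; the moment the center is valuable, $c$ can decline the tail entirely and $\efx$ is restored. (Working out the thresholds shows a counter-example in fact needs the central edges to be \emph{strictly smaller} than the large edge of $E(b,c)$ --- the exact opposite of your design. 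Also, as a side remark, your propagation step has the envy direction backwards: what forces the large edge of $E(b,c)$ to $c$ is preventing $c$ from strongly envying $b$, not the other way around.)

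The paper's proof is precisely this inversion of your design. The two tails are copies of the multi-$P_3$ gadget of Figure~\ref{fig:counter-example_p3_aamas} (values $\epsilon$, $10+\tfrac12\epsilon$ on $E(1,2)$ and $10$, $\epsilon$ on $E(2,3)$), which admits exactly two $\efx$ orientations, and in both the inner vertex is envied; the center is a \emph{single negligible} edge of value $\delta\ll\epsilon$ (Figure~\ref{fig:counter-example_p6}). Because $\delta$ is below every envy margin appearing in the gadget analysis, the gadget property survives embedding into the whole graph; then the reasoning behind Lemma~\ref{lem:envied-singleton} finishes the proof: whichever inner vertex receives the $\delta$-edge is envied by its gadget neighbour, and removing the $\delta$-edge (worth zero to that neighbour) leaves the envy intact, so the envy is strong --- contradiction. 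If you insist on $q=2$ on every adjacent pair (the paper's center has a single edge), simply use two central edges of value $\delta$ each; the same argument applies verbatim. So the repair is: keep your skeleton and your forced-propagation analysis of the tails, but make the center as light as possible and let the tails carry all the heavy, contested values.
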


\begin{proof}
    We present a multi-$P_6$ instance with $q = 2$ that does not admit any $\efx$ orientation. We use Figures~\ref{fig:counter-example_p3_aamas} and \ref{fig:counter-example_p6} to represent our counter-example.
    
    We first construct a special multi-$P_3$ instance where a specific node in any of its $\efx$ orientations is envied. Consider the multi-$P_3$ instance in Figure~\ref{fig:counter-example_p3_aamas}(a). One can easily show that it admits exactly two $\efx$ orientations, and agent $3$ is envied in both; see Figure~\ref{fig:counter-example_p3_aamas}(b) and Figure~\ref{fig:counter-example_p3_aamas}(c), 

    Now we build the multi-$P_6$ instance in Figure~\ref{fig:counter-example_p6} that is made by two copies of the above-mentioned special multi-$P_3$ instance. We connect them with an edge of value $\delta \ll \epsilon$ using the two vertices (vertex $3$ and vertex $4$) that are always envied in $\efx$ orientations of the multi-$P_3$ parts. Without loss of generality, let agent $3$ receive the edge $(3,4)$ with value $\delta$. Now, since in any $\efx$ orientation of the special multi-$P_3$ instance, agent $3$ was always envied, the addition of edge with value $\delta$ to agent $3$'s bundle will therefore create strong envy against her.

    One can extend the above counter-example by adding nodes to the graph with edges having value $\delta' \approx 0$ to achieve a counter-example for any $d(G) \geq 5$.
\end{proof}

\begin{figure}
    \centering
    \begin{subfigure}[t]{0.25\textwidth}
        \centering
        \begin{tikzpicture}[baseline=(current bounding box.north)]
        \Vertex[label = 1, x = 2, y = 0, color = white]{1}
        \Vertex[label = 2, x = 4, y = 0, color = white]{2}
        \Vertex[label = 3, x = 6, y = 0, color = white]{3}
        
        \Edge[label = $\epsilon$, bend=40](1)(2)
        \Edge[label = $10 + \frac{1}{2} \epsilon$, bend=40](2)(1)
        \Edge[label = $10$, bend=40](2)(3)
        \Edge[label = $\epsilon$, bend=40](3)(2)
        \end{tikzpicture}
        \caption{}
    \end{subfigure}
    
    \begin{subfigure}[t]{0.25\textwidth}
        \centering
        \begin{tikzpicture}[baseline=(current bounding box.north)]
        \Vertex[label = 1, x = 2, y = 0, color = white]{1}
        \Vertex[label = 2, x = 4, y = 0, color = white]{2}
        \Vertex[label = 3, x = 6, y = 0, color = white]{3}
        
        \Edge[label = $\epsilon$, bend=40, Direct](1)(2)
        \Edge[label = $10 + \frac{1}{2} \epsilon$, bend=40, Direct](2)(1)
        \Edge[label = $10$, bend=40, Direct](2)(3)
        \Edge[label = $\epsilon$, bend=40, Direct](3)(2)
        \end{tikzpicture}
        \caption{}
    \end{subfigure}
    \hspace{40pt} 
    \begin{subfigure}[t]{0.25\textwidth}
        \centering
        \begin{tikzpicture}[baseline=(current bounding box.north)]
        \Vertex[label = 1, x = 2, y = 0, color = white]{1}
        \Vertex[label = 2, x = 4, y = 0, color = white]{2}
        \Vertex[label = 3, x = 6, y = 0, color = white]{3}
        
        \Edge[label = $10 + \frac{1}{2} \epsilon$, bend=-40, Direct](1)(2)
        \Edge[label = $\epsilon$, bend=-40, Direct](2)(1)
    
        \Edge[label = $\epsilon$, bend=-40, Direct](2)(3)
        \Edge[label = $10$, bend=40, Direct](2)(3)
        \end{tikzpicture}
        \caption{}
    \end{subfigure}
    \caption{We depict a multi-$P_3$ instance whose every $\efx$ orientation leaves agent $3$ envied. We use this instance as a building block to give a multi-$P_6$ instance that does not admit any $\efx$ orientation. Orientations (b) and (c) are the only $\efx$ orientations in this instance.}
    \label{fig:counter-example_p3_aamas}
\end{figure}
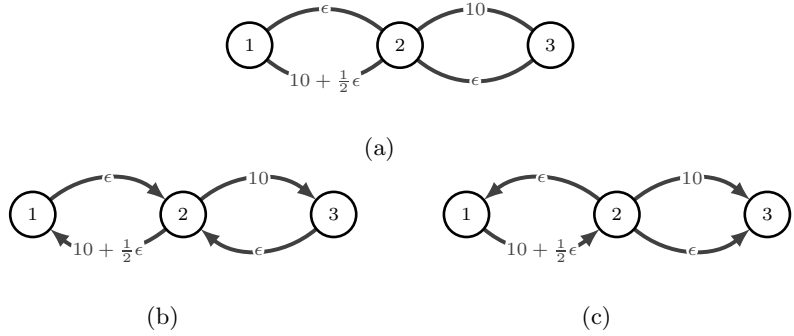

\begin{figure}
    \centering
    \begin{tikzpicture}
    \Vertex[label = 1, x = 2, y = 0, color = white]{1}
    \Vertex[label = 2, x = 4, y = 0, color = white]{2}
    \Vertex[label = 3, x = 6, y = 0, color = white]{3}
    \Vertex[label = 6, x = 12, y = 0, color = white]{6}
    \Vertex[label = 5, x = 10, y = 0, color = white]{5}
    \Vertex[label = 4, x = 8, y = 0, color = white]{4}
    
    \Edge[label = $\epsilon$, bend=40](1)(2)
    \Edge[label = $10 + \frac{1}{2} \epsilon$, bend=40](2)(1)
    \Edge[label = $10$, bend=40](2)(3)
    \Edge[label = $\epsilon$, bend=40](3)(2)
    \Edge[label = $\delta$](3)(4)
    \Edge[label = $\epsilon$, bend=40](6)(5)
    \Edge[label = $10 + \frac{1}{2} \epsilon$, bend=40](5)(6)
    \Edge[label = $10$, bend=40](5)(4)
    \Edge[label = $\epsilon$, bend=40](4)(5)
    
    \end{tikzpicture}
    
    \caption{A multi-$P_6$ instance that does not admit any $\efx$ orientation.}
    \label{fig:counter-example_p6}
\end{figure}
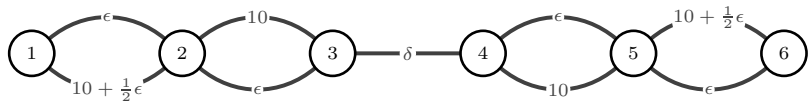

\subsection{Hardness of Deciding the Existence of EFX Orientations}

In this section, we consider the computational problem of deciding whether a given instance on a bipartite multi-graph (even a multi-tree with a constant number of agents) admits an $\efx$ orientation. We reduce the $\np$-complete \emph{Partition Problem} to our problem.

\paragraph{Partition Problem.} Consider a multi-set\footnote{A multi-set allows multiple instances for each of its elements.} $P= \{p_1, p_2, \ldots, p_k\}$ of $k$ non-negative integers. The problem is to decide whether $P$ can be partitioned into two multi-sets $P_1$ and $P_2$ such that $\sum_{p \in P_1}p = \sum_{p \in P_2}p$.

We prove a stronger claim and prove hardness for multi-tree instances. 

\begin{theorem}\label{thm:hardness-orientation-multi-tree}
    The problem of deciding whether a fair division instance on a multi-tree (with additive valuations) admits an $\efx$ orientation is $\np$-complete. It holds true even for symmetric instances with a constant number of agents.

\end{theorem}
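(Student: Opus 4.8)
The plan is to prove membership in \np\ and then establish \np-hardness by a reduction from \textsc{Partition}. Membership is immediate: an orientation is a polynomial-size certificate, and by Observation~\ref{obs:orientation_efx} checking that it is \efx\ reduces to testing, for each vertex against each of its neighbors, whether strong envy occurs---a polynomial number of additive comparisons. For hardness, given a \textsc{Partition} instance $P=\{p_1,\dots,p_k\}$ with $\Sigma=\sum_i p_i$, I would build a symmetric multi-tree $G$ on a \emph{constant} number of agents in which the $k$ numbers are encoded as edge multiplicities rather than as new vertices: a designated pair of adjacent agents $u,w$ share exactly $k$ parallel edges, the $i$-th having symmetric value $p_i$. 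Since $G$ is a multi-tree and the orientation is complete, each such edge is directed to $u$ or to $w$; writing $s_u,s_w$ for the partition value each keeps, we have $s_u+s_w=\Sigma$. The aim is to attach small \emph{forcing} sub-gadgets to $u$ and $w$ so that an \efx\ orientation exists if and only if $s_u=s_w=\Sigma/2$, i.e. iff $P$ has a balanced partition.

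The forcing gadgets are variants of the multi-$P_3$ block of Theorem~\ref{thm:orientation-counter-example-q2} (Figure~\ref{fig:counter-example_p3_aamas}), tuned with scales $1\gg\epsilon\gg\delta$, and their role is twofold. First, each gadget forces its terminal ($u$, resp.\ $w$) to carry, in \emph{every} \efx\ orientation, at least one item that is worthless to its partner across the partition edges; this is exactly the phenomenon behind the multi-$P_6$ non-existence argument (Figure~\ref{fig:counter-example_p6}), where an envied vertex holds a near-tight bundle so that an extra worthless item can be ``removed for free.'' Concretely, once $u$ holds a worthless-to-$w$ item, the strong-envy test between $u$ and $w$ becomes slack-free on the partition edges: $w$ fails to strongly envy $u$ only when $s_w$ dominates $s_u$, since deleting the worthless item leaves the full value $s_u$ in $w$'s eyes. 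Second, placing symmetric forcers on both sides and calibrating their values with the sub-unit perturbations makes the two strong-envy constraints read $s_u\le \Sigma/2+O(\epsilon)$ and $s_w\le \Sigma/2+O(\epsilon)$; as the $p_i$ are integers and the perturbations lie below $1$, these together with $s_u+s_w=\Sigma$ force $s_u=s_w=\Sigma/2$.

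Correctness then follows in both directions. If $P$ admits a balanced partition, orient the parallel edges accordingly and drive each forcing gadget into its unique \efx\ state (as can be computed for the multi-$P_3$ block); by construction no strong envy arises. Conversely, any \efx\ orientation must place the gadgets in their forced states, so $u$ and $w$ each carry a worthless-to-partner item, the tight inequalities hold, and the induced split of $\{p_1,\dots,p_k\}$ is balanced. Because the gadgets use only constantly many vertices and the input numbers enter solely as multiplicities, $G$ is a symmetric multi-tree with a constant number of agents that is constructed in polynomial time, yielding \np-completeness.

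The main obstacle is precisely the gap between \efx\ (``up to any item'') and the exact equality that \textsc{Partition} demands: a single split of parallel edges between two agents is \emph{always} \efx-feasible by cut-and-choose, so the inherent slack must be neutralized. The crux is therefore to design the forcing gadgets so that, in every \efx\ orientation, both partition-holders are compelled to hold an item worthless to the other---turning each strong-envy test into a slack-free comparison---and to verify, via the case analysis underlying Lemma~\ref{lem:envied-singleton} and the multi-$P_3$ gadget, that these gadgets admit no alternative orientations that could reintroduce slack and thereby create spurious \efx\ orientations.
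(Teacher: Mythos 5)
Your proposal is correct and takes essentially the same route as the paper's proof: a reduction from \textsc{Partition} in which the $k$ integers appear as parallel edges between two designated agents, with copies of the multi-$P_3$ gadget of Theorem~\ref{thm:orientation-counter-example-q2} attached via negligible-value edges so that, in every $\efx$ orientation, each partition-holder is forced to carry an item worthless to the other, collapsing the strong-envy conditions into equality of the two partition sums. The paper's Figure~\ref{fig:NP-hardness} instantiates exactly this construction with eight agents.
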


\begin{proof}
    Given an orientation, it is easy to verify if it is $\efx$. Hence, the problem belongs in $\np$. 
    
    Let us now consider an instance $P = \{p_1, p_2, \ldots, p_k\}$ of the Partition problem. We will construct a fair division instance on a multi-tree with eight vertices, as depicted in Figure \ref{fig:NP-hardness}. Note that we have used the multi-$P_3$ instance used in the proof of Theorem~\ref{thm:orientation-counter-example-q2} here as well. Following the similar lines, we can argue that in any $\efx$ orientation in this instance, agent 2 envies agent 3, and agent 7 envies agent 6. Thus, the two edges $(3, 4)$ and $(5, 6)$ are allocated to agents 4 and 5 respectively. Now, one can observe that this instance admits an $\efx$ orientation if and only if the set $P$ can be partitioned into two sets of equal sum. This completes our proof.
\end{proof}

\begin{figure}
    \centering
    \begin{tikzpicture}

        \Vertex[label = 1, x = 0, y = 0, color = white]{1}
        \Vertex[label = 2, x = 2, y = 0, color = white]{2}
        \Vertex[label = 3, x = 4, y = 0, color = white]{3}

        \Vertex[label = 4, x = 5, y = 1, color = white]{4}
        \Vertex[label = 5, x = 7, y = 1, color = white]{5}
        
        \Edge[label = $\epsilon$, bend=40](1)(2)
        \Edge[label = $10 + \frac{1}{2} \epsilon$, bend=40](2)(1)
        \Edge[label = $10$, bend=40](2)(3)
        \Edge[label = $\epsilon$, bend=40](3)(2)

        \Vertex[label = 6, x = 8, y = 0, color = white]{6}
        \Vertex[label = 7, x = 10, y = 0, color = white]{7}
        \Vertex[label = 8, x = 12, y = 0, color = white]{8}
        
        \Edge[label = $\epsilon$, bend=40](7)(8)
        \Edge[label = $10 + \frac{1}{2} \epsilon$, bend=40](8)(7)
        \Edge[label = $10$, bend=40](6)(7)
        \Edge[label = $\epsilon$, bend=40](7)(6)

        \path (4) -- (5) node [black, font=\normalsize, midway, sloped, rotate=270] {$\dots$};

        \Edge[label = $p_1$, bend = 60](4)(5)
        \Edge[label = $p_2$, bend = 30](4)(5)
        \Edge[label = $p_{k-1}$, bend = -30](4)(5)
        \Edge[label = $p_k$, bend = -60](4)(5)

        \Edge[label = $\delta$](3)(4)
        \Edge[label = $\delta$](5)(6)

    \end{tikzpicture}
    \caption{The construction used in proof of Theorem~\ref{thm:hardness-orientation-multi-tree}. Here, $\delta \ll \epsilon \ll 1$.}
    \label{fig:NP-hardness}
\end{figure}
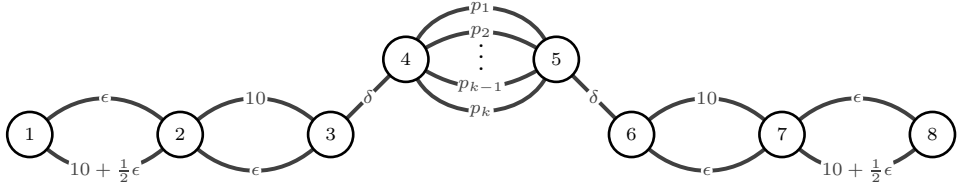

Theorem~\ref{thm:hardness-orientation-multi-tree} immediately implies the following, which is independently proved by \cite{deligkas2024ef1}:

\begin{corollary} \label{cor:orientation}
    Deciding whether an EFX orientation exists on a multi-graph with additive valuations is $\np$-complete, even for symmetric instances with a constant number of agents.
\end{corollary}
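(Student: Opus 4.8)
The plan is to derive this statement directly from Theorem~\ref{thm:hardness-orientation-multi-tree} by exploiting a simple class containment: every multi-tree is, in particular, a multi-graph. Consequently, the decision problem restricted to multi-trees is a \emph{special case} of the decision problem on general multi-graphs, and any reduction that establishes hardness for the former establishes it verbatim for the latter. So the whole argument reduces to (i) confirming membership in \np\ and (ii) inheriting the \np-hardness from the theorem, together with its side conditions (symmetry, constant number of agents).

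First I would establish membership in \np. Given a candidate orientation $X$ as a certificate, by Observation~\ref{obs:orientation_efx} it suffices to check, for each edge of the skeleton, whether either endpoint strongly envies the other. Since the valuations are additive, each such check is a polynomial number of value comparisons, so one can verify in polynomial time whether $X$ is an \efx\ orientation. Hence the problem lies in \np.

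Next I would argue \np-hardness. The reduction from \textsf{Partition} in the proof of Theorem~\ref{thm:hardness-orientation-multi-tree} takes an instance $P=\{p_1,\dots,p_k\}$ and outputs a fair-division instance whose underlying graph is a multi-tree on eight vertices, with symmetric additive valuations, such that the instance admits an \efx\ orientation if and only if $P$ has an equal-sum partition. Because a multi-tree is a multi-graph, this identical construction is already a valid many-one reduction from \textsf{Partition} to the \efx-orientation problem on multi-graphs. The number of agents remains constant (eight) and the instance remains symmetric, so every stated restriction carries over unchanged; combining this inherited hardness with the \np-membership above gives \np-completeness, as claimed. The only point requiring care, and the sole (conceptual rather than technical) obstacle, is to verify that nothing in the multi-tree gadget is lost when it is merely reinterpreted as a multi-graph instance: the edge-value gadgets, the symmetry $v_i(g)=v_j(g)$, and the constant agent count must all be preserved, which they are.
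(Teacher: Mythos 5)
Your proposal is correct and matches the paper's reasoning exactly: the paper presents this corollary as an immediate consequence of Theorem~\ref{thm:hardness-orientation-multi-tree}, precisely because a multi-tree is a special case of a multi-graph, so the \textsf{Partition} reduction (with its symmetry and constant agent count) carries over verbatim. Your additional spelling-out of \np-membership via Observation~\ref{obs:orientation_efx} is a harmless elaboration of what the paper leaves implicit.
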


\section{Existence of EFX on Bipartite Multi-Graphs for Monotone Valuations} \label{sec:monotone}

In this section, we prove our main result showing that any fair division instance on a bipartite multi-graph with \emph{monotone} valuations always admits an  $\efx$ allocation. Furthermore, such allocations can be computed in pseudo-polynomial time for \emph{monotone} valuations and in polynomial time for \emph{cancelable} valuations, a strict superclass of additive valuations (Theorem~\ref{thm:multi-bipartite-main}). Recall that (as discussed in Section~\ref{sec:orientations}), $\efx$ orientations may not exist for these instances, and hence, we know that an $\efx$ allocation would allocate some items to an agent who may not value them. 

We will denote a fair division instance on a bipartite multi-graph by $G = (S \sqcup T, E)$, where $S$ and $T$ represent the two bi-partition parts of its skeleton. We begin by discussing the main idea of our technique (in Section~\ref{sec:allocations-idea}) and then define some concepts and properties (in Section~\ref{sec:definition-4}) useful for our proof (in Sections~\ref{sec:allocations-proof1}, \ref{sec:allocations-proof2}, \ref{sec:allocations-proof3}, \ref{sec:safe-set}, and \ref{sec:complete-allocation}). For the ease of understanding, we have further presented an example of executing our algorithm step by step on a bipartite multi-graph with additive valuations (see Appendix \ref{appendixA}).

\subsection{Main Idea} \label{sec:allocations-idea}

We introduce the concept of \emph{configuration} to decide how to allocate the edges between any two adjacent vertices (in $G$) to their endpoints. It has a flavor that is similar to the cut-and-choose protocol (used for finding $\efx$ allocations between two agents). In our proof, we will use this configuration to partially orient the edges between two agents. We define it as follows:

\begin{definition}\label{def:configuration} 
\textbf{\emph{$T$-cut} Configurations:} For any pair of agents $i\in S$ and $j \in T$, we let agent $j$ to partition the set $E(i, j)$ into two bundles $C_1$ and $C_2$ such that both are $\efx$-feasible for $j$ (for the items $E(i, j)$). We call the partition $(C_1, C_2)$ as the \emph{$j$-cut} configuration between agents $i$ and $j$. 
\end{definition}

We now briefly discuss the running time of computing such partitions in the following remark, which will later be used to derive the overall running time of our algorithm.

\begin{remark}\label{rem:conf-runtime}
    It is well known that the algorithm referred to as the PR algorithm (using the terminology as \cite{akrami2025efx}), first introduced by \cite{plaut2020almost}, can compute our cut configurations in pseudo-polynomial time for monotone valuations (see the proof of Theorem 3.49 in \cite{AGS25}). Moreover, \cite{AGS25} proved that with minor modifications, this algorithm can compute our desired cut configurations in polynomial time for cancelable valuations (see Lemma 4.7 in \cite{AGS25}).
\end{remark}

As a warm-up, we will use \emph{$T$-cut} configurations to prove the existence of $\efx$ orientations for multi-stars with any $q$. Previously, (in Proposition~\ref{prop:2}), we proved the existence of $\efx$ orientations for multi-stars with $q=2$.

\begin{restatable}{proposition}{multistarpropgeneral}\label{prop:multi-stars}
    $\efx$ orientations exist and can be computed in pseudo-polynomial time for multi-stars with any $q$. Moreover, we can compute such allocations in polynomial time for cancelable valuations.  
\end{restatable}
\begin{proof}
    Assume $c$ is the center of the given multi-star, and assume that $T = \{c\}$ in $G$. For every vertex $i$ incident to $c$, consider the \emph{$c$-cut} configuration $(C_1,C_2)$ between them. Then, we allocate the bundle which $i$ prefers between $C_1$ and $C_2$ to her and allocate the other bundle to $c$. Therefore, for every such vertex $i$, she will not envy agent $c$. Moreover, since we are using the \emph{$c$-cut} configurations, agent $c$ will not strongly envy any of her neighbors. Using Remark \ref{rem:conf-runtime}, it is clear that the final orientation can be computed in pseudo-polynomial time for monotone valuations and in polynomial time for cancelable valuations.
\end{proof}

\subsection{Some Useful Notions and Properties}\label{sec:definition-4}

For a partial allocation $X = (X_1, X_2, \ldots, X_n)$, we define the following new notions:

\begin{itemize}
    \item For any two adjacent agents $i \in S$ and $j \in T$, we define $A_{i,j}(X)$ as the set of \emph{available} edges in $E(i, j)$ for $i$ and $A_{j,i}(X)$ as the set of \emph{available} edges in $E(i, j)$ for $j$. Formally, we define these sets as specified below in Table \ref{tab:Ai,j defjintion}. Let us assume the $j$-cut configuration of $E(i,j)$ is $(C_1,C_2)$.

    \begin{table}[h!]
    \centering
    \begin{tabular}{|c|c|c|}
    \hline
    & $A_{i,j}(X)$ & $A_{j,i}(X)$\\
    \hline
    $E(i,j) \cap X_k = \emptyset \ \text{for all} \ k \in [n]$ & $\argmax\limits_{C_1, C_2} \{v_i(C_1), v_i(C_2)\}$ & $\argmax\limits_{C_1, C_2} \{v_j(C_1), v_j(C_2)\}$ \\
    \hline
    $X_i \cap E(i,j) = \emptyset, X_j \cap E(i,j) \neq \emptyset$  & $E(i, j) \setminus X_j$ & $\emptyset$ \\
    \hline
    $ X_i \cap E(i,j) \neq \emptyset, X_j \cap E(i,j) = \emptyset$ & $\emptyset$ & $E(i, j) \setminus X_i$\\
    \hline
    Any other case & Not defined & Not defined \\
    \hline
    \end{tabular}
    \caption{Definition of $A_{i,j}(X)$ and $A_{j,i}(X)$}\label{tab:Ai,j defjintion}
    \end{table}
    
    \item For $i \in [n]$, we define $A_i(X)$ to be her \emph{available set} of edges, i.e., $A_i(X) = \bigcup\limits_{j \neq i, j \in [n]} A_{i,j}(X)$.

    \item For $i \in [n]$, $U_i(X)$ is the set of all unallocated edges incident to $i$. Note that $A_i(X) \subseteq U_i(X)$.
    
    \item For $i \in [n]$, $B_i(X)$ is the set of all \emph{available bundles} for $i$,
    i.e. $B_i(X) = \{A_{i,j}(X) : j \neq i, j \in [n]\}$.

    \item For any envied agent $i \in [n]$, we define $S_i(X) \subseteq [n]$ to be her \emph{safe} set, as follows, $$S_i(X) = \{k \in [n]: k \ \text{is non-envied in} \ X \ \text{and} \ v_i(X_i) \ge v_i(X_k \cup A_i(X))\}.$$ That is, $i$ will not envy $k$ even if we allocate her whole available set $A_i(X)$ to $k$.
\end{itemize}

To achieve a complete $\efx$ allocation, we will use a approach that is structurally similar to  \cite{christodoulou2023fair}. We will first find a partial orientation with some nice properties and then allocate the remaining edges to some agent who is not incident to them. Identifying these key nice properties is a non-trivial challenge that we address next.

\paragraph{Key Properties:} We search for a partial allocation $X = (X_1, X_2, \ldots, X_n)$ with the following properties:

\begin{enumerate}
    \item $X$ is an $\efx$ orientation.

    \item For any two adjacent agents $i \in S$ and $j \in T$, items in $E(i, j)$ must be allocated based on the \emph{$j$-cut} configuration $(C_1,C_2)$ to either one of their endpoints. By this property, we mean that one of the following cases must happen for $X$ (following the Definition~\ref{def:configuration}):

    \begin{itemize}
        \item  Either $C_1 \subseteq X_i, C_2 \subseteq X_j$ or $C_2 \subseteq X_i, C_1 \subseteq X_j$.
        \item One of the bundles\footnote{$X_i$ can also have other items.} $C_1$ or $C_2$ is allocated to agent $i$, and the other bundle is unallocated in $X$.
        \item One of the bundles $C_1$ or $C_2$ is allocated to agent $j$, and the other bundle is unallocated in $X$.
        \item No item in the set $E(i, j)$ is allocated in $X$.
    \end{itemize}

    \item For any agent $i \in [n]$ and a set $B \in B_i(X)$, we have $v_i(X_i) \ge v_i(B)$.

    \item For any non-envied agent $i \in [n]$, we have $A_i(X) = \emptyset$.

    \item For any non-envied agent $i \in [n]$, we have $v_i(X_i) \ge v_i(U_i(X))$.

    \item For any envied agent $i \in [n]$, let $j$ envies $i$. Then, we have $j \in S_i(X)$.
        
\end{enumerate}

We are now finally equipped to present our algorithm. We will give a step-by-step procedure to satisfy each key property in the above order. These key properties ensure that there is an easy way to then convert a partial $\efx$ orientation to a complete $\efx$ allocation (see Section~\ref{sec:complete-allocation}).

As mentioned earlier, for the ease of understanding, we present an example of executing our algorithm step by step on a bipartite multi-graph with additive valuations (see Appendix \ref{appendixA}).

\subsection{Satisfying Properties (1)-(3)}
\label{sec:allocations-proof1}

We present a greedy algorithm that assigns a set of items to each agent and satisfies the first three properties. It works in the following manner.

Let $S = \{i_1, i_2, \ldots, i_{|S|}\}$ and $T = \{j_1, j_2, \ldots, j_{|T|}\}$ be the two bi-partitions in the given multi-graph $G$. We fix a picking sequence $\sigma = [i_1, \ldots, i_{|S|}, j_1, \ldots, j_{|T|}]$ that decides the order in which an agent comes and selects her most valuable available bundle. Since the definition of $A_{i, j}(X)$ is dynamic, the set of available bundles for some agents may change after another agent picks her favorite bundle in the picking sequence. Algorithm \ref{alg:greedy} illustrates this procedure. The properties of this algorithm are further formalized in Lemmas \ref{lem:greedy} and \ref{lem:envy-S}.

\begin{algorithm}[tb]
\caption{Greedy Orientation: Properties (1)-(3)}\label{alg:greedy}
\KwIn{A fair division instance on a bipartite multi-graph $G = (S \sqcup T, E)$}
\KwOut{A partial $\efx$ orientation satisfying properties $(1)$-$(3)$}
\BlankLine
\For{$(l \gets 1$;  $l \le |S|$ ;  $l  \pluseq  1)$}{

    $k \gets \argmax\limits_{k \in [n] \setminus \{i_l\}} v_{i_l}(A_{i_l, k}(X))$
    
    $X_{i_l} \gets A_{i_l, k}(X)$
  
}

\For{$(l \gets 1$;  $l \le |T|$ ;  $l  \pluseq  1)$}{

    $k \gets \argmax\limits_{k \in [n] \setminus \{j_l\}} v_{j_l}(A_{j_l, k}(X))$
    
    $X_{j_l} \gets A_{j_l, k}(X)$
  
}

\Return{$X = (X_1, \cdots, X_n)$}

\end{algorithm}

\begin{restatable}{lemma}{lemmagreedy}\label{lem:greedy}
        For a fair division instance on a bipartite multi-graph, the output allocation of Algorithm~\ref{alg:greedy} satisfies properties (1)-(3). Moreover, the algorithm runs in polynomial time if the cut configurations are given as input.
\end{restatable}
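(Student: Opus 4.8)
The plan is to prove all three properties together by tracking how the dynamic sets $A_{i,j}(X)$ evolve along the picking sequence $\sigma$, exploiting the one structural fact that makes the bipartite case tractable: the edge-set $E(i,j)$ can be touched only by its two endpoints $i \in S$ and $j \in T$, and the whole of $S$ is processed before any agent of $T$ picks.

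First I would establish the key invariant that every bundle ever assigned is a \emph{complete} cut bundle, which immediately yields Property (2). Since $S$-agents are pairwise non-adjacent, when $i_l \in S$ takes her turn each incident set $E(i_l,k)$ is still untouched (earlier $S$-agents consumed only edges from their own incident sets, which are disjoint from $E(i_l,k)$); hence, by the first row of the defining table of $A_{i,j}(X)$, the available bundle $A_{i_l,k}(X)$ is exactly the better-for-$i_l$ of the two $k$-cut bundles $(C_1,C_2)$, so $i_l$ receives one whole cut bundle of her favorite neighbor. When later $j_l \in T$ picks, for each neighbor $i$ the set $E(i,j_l)$ is either untouched, giving the better-for-$j_l$ cut bundle, or already holds one cut bundle taken by $i$, in which case $A_{j_l,i}(X)=E(i,j_l)\setminus X_i$ is precisely the complementary cut bundle. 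Either way $j_l$ again receives a single complete cut bundle, so each pair $E(i,j)$ terminates in one of the configurations of Definition~\ref{def:configuration}.

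For Property (3) I would use monotonicity. A pick by an $S$-agent or by a later $T$-agent consumes edges only from its own incident sets, which are disjoint from any pair already decided; thus once an agent has chosen, none of her remaining available bundles can increase in value. For $i \in S$ the final $A_{i,j}(X)$ is either unchanged from her turn, empty, or a complementary cut bundle worth no more to $i$ than the one she already compared against, and she selected the value-maximizing available bundle, giving $v_i(X_i)\ge v_i(B)$ for all $B\in B_i(X)$. For $j \in T$ the point is even cleaner: after $j$ picks, every subsequent pick is made by another $T$-agent and therefore leaves all sets $E(i,j)$ untouched, so $j$'s final available bundles coincide with those at her turn, over which she already maximized. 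Property (1) then follows via Observation~\ref{obs:orientation_efx}: orientation-ness is immediate because each assigned bundle lies in some $E(i,k)$ and is given to the endpoint $i$, and for $\efx$ I only check adjacent pairs $i\in S$, $j\in T$. An $S$-agent never even weakly envies a neighbor, since $v_i(X_i)\ge \max\{v_i(C_1),v_i(C_2)\}\ge v_i(X_j\cap E(i,j))=v_i(X_j)$; the only residual risk is $j\in T$ strongly envying $i\in S$, which can arise only when $i$ holds a cut bundle $C$ of $E(i,j)$, and then the complementary bundle $C'$ was available to $j$ at her turn, so $v_j(X_j)\ge v_j(C')\ge \max_{g\in C} v_j(C\setminus g)$, the last inequality being exactly the $\efx$-feasibility of the $j$-cut guaranteed by Lemma~\ref{lem:conf-poly}.

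The delicate point, and where I expect to spend the most care, is the bookkeeping that the dynamic quantity $A_{i,j}(X)$ an agent compares against at her pick time is never improved upon afterward, so that each inequality proved at pick time survives unchanged to the final allocation; the bipartite structure together with processing all of $S$ before $T$ is precisely what makes every ``later'' pick irrelevant to an already-decided pair. Polynomial running time is then routine: the $j$-cut configurations are computed once in polynomial time (Lemma~\ref{lem:conf-poly}), and the two loops perform $|S|+|T|=n$ iterations, each an $\argmax$ over at most $n$ neighbors.
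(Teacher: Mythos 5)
Your proposal is correct and follows essentially the same route as the paper's proof: greedy picks yield properties (2) and (3) (which the paper dismisses as trivial and you verify via the disjointness/monotonicity bookkeeping), and $\efx$ follows from the picking order ($S$ before $T$, so no $S$-agent envies a $T$-neighbor) together with the cut-and-choose feasibility of the $j$-cut configuration ruling out strong envy from $T$ toward $S$. Your treatment is somewhat more detailed than the paper's, but there is no substantive difference in approach.
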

\begin{proof}
    Let us denote $X$ as the output allocation of Algorithm~\ref{alg:greedy}. First, note that $X$ is an orientation since every agent picks items adjacent to them. Next, since every agent selects a bundle greedily, properties (2) and (3) are trivially satisfied. Therefore, what is remaining is to argue that $X$ is $\efx$. By Observation~\ref{obs:orientation_efx}, we only need to check the strong envy between adjacent nodes. 
    
    Consider an arbitrary agent $i \in S$ and some agent $j \in T$ adjacent to $i$. Note that since $i$ precedes $j$ in the picking sequence,  $i$ does not envy $j$. If agent $i$ does not pick edges from the set $E(i, j)$, then $j$ clearly does not envy $i$. Now, consider the case where agent $i$ picks the bundle she prefers from the \emph{$j$-cut} configuration $(C_1,C_2)$ of $E(i, j)$. Let us assume  $v_i(C_1) \ge v_i(C_2)$ and hence $i$ picks $C_1$ (the other case is symmetric). Therefore, $C_2$ is available for $j$ on her turn to pick a bundle in Algorithm~\ref{alg:greedy}. Hence, $j$ receives a bundle such that $v_j(X_j) \geq v_j(C_2)$. By construction of $C_1$ and $C_2$, we know that for every item $g \in X_i=C_1$, we have $v_j(C_2) \ge v_j(C_1 \setminus g)$. Since, $v_j(X_j) \geq v_j(C_2)$, $j$ does not strongly envy $i$. Overall, we have shown that $X$ is a partial $\efx$ orientation, which completes our proof. It is also clear that the algorithm terminates in polynomial time if the cut configurations are given as input.
\end{proof}

\begin{restatable}{lemma}{lemenvyS}\label{lem:envy-S}
     In the output allocation $X$ of Algorithm~\ref{alg:greedy}, every envied vertex belongs to the set $S$.   
\end{restatable}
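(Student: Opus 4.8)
The plan is to prove the equivalent statement that no vertex of $T$ is ever envied in the output $X$ of Algorithm~\ref{alg:greedy}; since $S$ and $T$ partition the vertex set, this is exactly the claim. First I would record the basic consequence of the multi-graph model: for $j \in T$, an agent $i$ can envy $j$ only if $i$ values some edge in $X_j$, and since every edge of $X_j$ is incident to $j$, this forces $i$ to be a neighbor of $j$ and hence $i \in S$ by bipartiteness. Moreover $v_i(X_j) = v_i\big(X_j \cap E(i,j)\big)$, because the only edges in $X_j$ that $i$ values positively are those shared with $j$. So it suffices to fix adjacent $i \in S$, $j \in T$ and prove $v_i(X_i) \ge v_i(X_j)$.

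The key structural observation I would isolate is that $E(i,j)$ is completely untouched at the moment $i$ takes her turn in the picking sequence $\sigma = [i_1,\dots,i_{|S|}, j_1,\dots,j_{|T|}]$. Indeed, only $i$ and $j$ are incident to the edges of $E(i,j)$, and when $i$ picks, the agents who have already picked are all elements of $S$ preceding $i$; none of them equals $i$ or $j$ (as $j \in T$). Hence at $i$'s turn the first row of Table~\ref{tab:Ai,j defjintion} applies, so $A_{i,j}(X) = \argmax_{C_1,C_2}\{v_i(C_1), v_i(C_2)\}$ for the $j$-cut configuration $(C_1,C_2)$ of $E(i,j)$, whose value to $i$ is $\max\{v_i(C_1), v_i(C_2)\}$. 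Because $i$ greedily selects the overall most valuable available bundle, $v_i(X_i) \ge v_i(A_{i,j}(X)) = \max\{v_i(C_1), v_i(C_2)\}$.

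Finally I would bound $v_i(X_j)$ from above. By property (2) (guaranteed for the output of Algorithm~\ref{alg:greedy} via Lemma~\ref{lem:greedy}), the edges of $E(i,j)$ are allocated according to the $j$-cut configuration, and since each agent picks a single available bundle, $X_j \cap E(i,j)$ is either empty or exactly one of $C_1, C_2$. In every case $v_i(X_j) = v_i\big(X_j \cap E(i,j)\big) \le \max\{v_i(C_1), v_i(C_2)\} \le v_i(X_i)$, so $i$ does not envy $j$. As this holds for every neighbor $i$ of $j$, and only neighbors can envy $j$, the vertex $j$ is non-envied; thus every envied vertex lies in $S$. The one point requiring care, and the crux of the argument, is the freshness of $E(i,j)$ at $i$'s turn: this is precisely where the $S$-before-$T$ ordering of $\sigma$ together with the bipartite structure are used, and everything else is a short monotonicity chain.
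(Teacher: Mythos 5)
Your proof is correct and takes essentially the same route as the paper: the paper's argument is precisely that every agent of $S$ picks before every agent of $T$ (hence no $S$-agent envies a $T$-agent) combined with the fact that agents in $T$ value each other's bundles at zero. The only difference is that you spell out in full the monotonicity chain (freshness of $E(i,j)$ at $i$'s turn, the greedy choice dominating $\max\{v_i(C_1),v_i(C_2)\}$, and the bound on $v_i(X_j)$) that the paper compresses into the single sentence ``since $i$ precedes $j$ in the picking sequence, $i$ does not envy $j$.''
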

\begin{proof}
    We show that agents in $T$ are non-envied in $X$. Note that any agent in $S$ appears before any agent in $T$ in the picking sequence; therefore, they do not envy agents in $T$. Furthermore, agents in $T$ do not envy each other and value each other's bundle at zero since they are not connected in the graph skeleton, and our allocation is an orientation. Therefore, any envied vertex (if any) in $X$ must belong to the set $S$.
\end{proof}

As we proceed, Lemma \ref{lem:envy-S} will continue to hold while we obtain our desired orientation. As demonstrated in the following sections, we will not produce new envied vertices when we modify our allocation $X$ to satisfy properties (1)-(6).

\subsection{Satisfying Property (4)}\label{sec:allocations-proof2}
Let us now focus on satisfying property (4) that requires $A_i(X) = \emptyset$ for any non-envied agent $i \in [n]$. Let us assume that the output allocation $X$ of Algorithm~\ref{alg:greedy} violates property (4). Consider a non-envied agent $i \in [n]$ with $A_i(X) \neq \emptyset$. Therefore, an agent $j \in [n]$ exists such that $A_{i, j}(X) \neq \emptyset$. We will now allocate all of $A_{i,j}(X)$ either to $i$ or $j$, depending on the following three possible cases.\footnote{$A_{i, j}(X)$ will be allocated either to $i$ or $j$.}

\begin{itemize}
    \item \textbf{Case 1. A set of items in $E(i,j)$ is allocated to $j$:} Since $A_{i,j} \neq \emptyset$, by its definition and property (2), no edge in $E(i,j)$ is allocated to $i$. In this case, we can allocate $A_{i,j}(X)$, which is exactly the set $E(i, j) \setminus X_j$ to $i$. Since $X_j \cap E(i,j) \neq \emptyset$, $j$ chose the better bundle from the configuration of $E(i,j)$ during Algorithm~\ref{alg:greedy}. Also, the set $A_{i,j}(X)$ has value only to agents $i$ and $j$; therefore, since agent $i$ was non-envied before, the modified allocation remains $\efx$. One can observe that properties (1)-(3) remain satisfied. Observe that, in this case, $E(i, j)$ will be fully allocated.

    \item \textbf{Case 2. No item in $E(i,j)$ is allocated, and $j$ is non-envied:}  Without loss of generality, we can assume $i \in S$ and $j \in T$. Let the partition $(C_1, C_2)$ be the $j$-\emph{cut} configuration of the set $E(i, j)$. Let us assume $v_i(C_1) \ge v_i(C_2)$ (the other case is symmetric). Observe that by property (3) $v_i(X_i) \ge \max\{v_i(C_1), v_i (C_2)\}$ and $v_j(X_j) \ge \max\{v_j(C_1), v_j(C_2)\}$. Since the $X$ is an orientation, we have that $v_i(X_j) = v_j(X_i) = 0$. We now allocate $C_1$ to agent $i$ and $C_2$ to agent $j$ to obtain,
    \begin{align*}
        v_i(X_j \cup C_2) = v_i(C_2) \le v_i(X_i), \ \text{and} \  v_j(X_i \cup C_1) = v_j(C_1) \le v_j(X_j)
    \end{align*}

    Thus, the allocation remains $\efx$, and all the first three properties are still satisfied. Notice that $E(i, j)$ will be fully allocated in this case as well.

    \item \textbf{Case 3. No item in $E(i,j)$ is allocated and $j$ is envied:} 
    In this case, Lemma \ref{lem:envy-S} entails that $j \in S$ and $i \in T$. Let the partition $(C_1, C_2)$ be the \emph{$i$-cut} configuration of items $E(i,j)$. By property (3) of $X$, we have $v_j(X_j) \geq \max\{v_j(C_1), v_j(C_2)\}$. Assuming $v_i(C_1) \ge v_i(C_2)$ (the other case is symmetric), we allocate $C_1$ to agent $i$. Note that agent $j$ will not envy $i$ and agent $i$ remains non-envied in the modified allocation. Hence, the allocation remains $\efx$, and the properties (1)-(3) are still satisfied.
\end{itemize}

\paragraph{Formalized protocol (Algorithm \ref{alg:prop-4}) to satisfy property (4) along with properties (1)-(3):}  We repeat the following process as long as there is a non-envied agent $i \in [n]$ who violates property (4). We pick such a violator agent $i$. Then, for every agent $j \neq i$ such that $A_{i,j}(X) \neq \emptyset$, we allocate $A_{i,j}(X)$ according to the cases above. Note that we allocate at least one edge incident to $i$ at each step. Therefore, for each agent $i$, this step takes at most $O(m)$ iterations. Then, we repeat. At the end, for any non-envied agent $i \in [n]$, we ensure that $A_i(X) = \emptyset$, thereby satisfying property (4). Moreover, as discussed above, properties (1)-(3) remain satisfied as well. We abuse the notation and call the partial orientation we have built so far (that satisfies properties (1)-(4)) by $X$.

\begin{algorithm}[tb]
\caption{Allocating to Non-Envied Vertices: Properties (1)-(3) + Property (4)}\label{alg:prop-4}
\KwIn{A partial orientation $X$ that is an output of Algorithm~\ref{alg:greedy} on a bipartite multi-graph $G$, satisfying properties $(1)$-$(3)$}
\KwOut{A partial orientation $X$ satisfying properties $(1)$-$(4)$}

\BlankLine

\While{there exists a non-envied agent $i \in [n]$ such that $A_i(X) \neq \emptyset$}{

    \While{there exists an agent $j \in [n]$ such that $A_{i, j}(X) \neq \emptyset$}{

        \If{$X_j \cap E(i, j) \neq \emptyset$}{
        
            $X_i \gets X_i \cup A_{i, j}(X)$
            
        }
        
        \ElseIf{$j$ is non-envied}{

            Without loss of generality, let $i \in S$ and $j \in T$. 
            
            $(C_1, C_2) \leftarrow$ the \emph{$j$-cut} configuration of $E(i, j)$. 
            
            $C_{\ell} \leftarrow \argmax\limits_{C_1, C_2}\{v_i(C_1), v_i(C_2)\}$.

            $X_i \gets X_i \cup C_{\ell}$

            $X_j \gets X_j \cup C_{3-\ell}$
            
        }

        \Else{

            $(C_1, C_2) \leftarrow$ the \emph{$i$-cut} configuration of $E(i, j)$. 

            $C_{\ell} \leftarrow \argmax\limits_{C_1, C_2}\{v_i(C_1), v_i(C_2)\}$.

            $X_i \gets X_i \cup C_{\ell}$
            
        }
        
    }

}
 \Return{$X = (X_1, \dots, X_n)$}

\end{algorithm}

\begin{restatable}{claim}{claimunallocedges}\label{claim:unallocated_edges}
   After satisfying properties (1)-(4), if there exists a pair of agents $k, i \in [n]$ such that $A_{k,i}(X) \neq \emptyset$, then $k$ is an envied vertex, but $i$ is non-envied. Furthermore, $E(k,i) \setminus A_{k,i}(X)$ is allocated to $i$.
\end{restatable}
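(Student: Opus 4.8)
The plan is to derive all three assertions from property~(4), the bipartite invariant of Lemma~\ref{lem:envy-S}, and a single case check against Table~\ref{tab:Ai,j defjintion}. First I would establish that $k$ is envied. Since $A_{k,i}(X) \subseteq A_k(X)$, the hypothesis $A_{k,i}(X) \neq \emptyset$ gives $A_k(X) \neq \emptyset$, and the contrapositive of property~(4) (every non-envied agent has an empty available set) forces $k$ to be envied.

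Next I would locate $k$ and $i$ in the bipartition in order to conclude that $i$ is non-envied. Because $A_{k,i}(X)$ is nonempty, $E(k,i) \neq \emptyset$, so $k$ and $i$ are adjacent and hence lie in opposite parts of the bipartite skeleton. As $k$ is envied, Lemma~\ref{lem:envy-S}---which, as the text notes, is preserved by the modifications of Section~\ref{sec:allocations-proof2}---places $k \in S$, so $i \in T$; applying the same lemma in the form ``every vertex of $T$ is non-envied'' yields that $i$ is non-envied, the second assertion.

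Finally I would pin down which row of Table~\ref{tab:Ai,j defjintion} applies, reading $k$ as the $S$-agent and $i$ as the $T$-agent. Since $A_{k,i}(X) \neq \emptyset$, the row giving $A_{k,i}(X) = \emptyset$ is excluded, so either no edge of $E(k,i)$ is allocated, or $X_k \cap E(k,i) = \emptyset$ while $X_i \cap E(k,i) \neq \emptyset$. I would rule out the former: if no edge of $E(k,i)$ were allocated, then $A_{i,k}(X) = \argmax_{C_1, C_2}\{v_i(C_1), v_i(C_2)\} \neq \emptyset$, so $A_i(X) \neq \emptyset$, contradicting property~(4) for the non-envied agent $i$. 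This leaves $A_{k,i}(X) = E(k,i) \setminus X_i$ with $X_i \cap E(k,i) \neq \emptyset$, whence the set identity $E(k,i) \setminus A_{k,i}(X) = E(k,i) \cap X_i \subseteq X_i$ shows that $E(k,i) \setminus A_{k,i}(X)$ is exactly the portion of $E(k,i)$ allocated to $i$, as required.

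The only delicate point is that I invoke Lemma~\ref{lem:envy-S} for the current allocation $X$ (after properties~(1)--(4) are enforced) rather than for the raw output of Algorithm~\ref{alg:greedy}; I rely on the stated fact that the processing of Section~\ref{sec:allocations-proof2} creates no new envied vertices, so the lemma's conclusion still holds. Everything else reduces to two applications of property~(4) and one set identity, so I expect no genuine obstacle beyond carefully matching the generic pair $(k,i)$ to the $(S,T)$-convention of the table.
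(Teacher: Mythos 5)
Your proposal is correct and follows essentially the same route as the paper's proof: property~(4) in contrapositive to show $k$ is envied, Lemma~\ref{lem:envy-S} plus adjacency in the bipartite skeleton to conclude $i$ is non-envied, and property~(4) applied to $i$ to exclude the ``nothing allocated'' row of Table~\ref{tab:Ai,j defjintion}, forcing $E(k,i)\setminus A_{k,i}(X)\subseteq X_i$. Your write-up is merely more explicit about the table case analysis and about the preservation of Lemma~\ref{lem:envy-S} after Algorithm~\ref{alg:prop-4}, both of which the paper uses implicitly.
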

\begin{proof}
    Note that both $k$ and $i$ both cannot be envied by Lemma \ref{lem:envy-S}. Also, since $A_{k,i}(X) \neq \emptyset$, property (4) implies that agent $k$ cannot be non-envied. Thus, agent $k$ is envied, while agent $i$ is non-envied. By property (4), the set $E(k,i) \setminus A_{k,i}(X)$ must be allocated to agent $i$, otherwise $A_{i,k}(X) \neq \emptyset$, which is a contradiction.
\end{proof}

\begin{lemma}\label{lem:runtime-prop(4)}
    Algorithm \ref{alg:prop-4} terminates in polynomial time if the cut configurations are given as input. 
\end{lemma}
\begin{proof}
    For every non-envied agent $i \in [n]$ violating property (4), the outer while loop of Algorithm \ref{alg:prop-4} is executed, and it is clear that if the cut configurations are given as input, then $A_{i, j}(X)$ can be computed in polynomial time for every $i$ and $j$. Therefore, for every agent $i$ violating property (4), the outer while loop runs in polynomial time. Since there are at most $n$ agents violating property (4) and each execution of the outer while loop will fix one of them, the algorithm terminates in polynomial time with the cut configurations given as input. 
\end{proof}

\subsection{Satisfying Property (5)}\label{sec:allocations-proof3}

We now prove that our desired allocation satisfies property (5). Assume that after satisfying properties (1)-(4) there exist a non-envied agent $i \in [n]$ who violates property (5), i.e., $v_i(U_i(X)) > v_i(X_i)$. For every envied agent $j \in S$ adjacent to $i$ such that $A_{j,i}(X) \neq \emptyset$, we allocate this set to agent $i$ and remove the items from $E(i, j)$ that were previously allocated to $i$. Algorithm \ref{alg:non-envied-unalloc} and Lemma \ref{lem:prop-5} formalize this step.

\begin{algorithm}
\caption{Increase Non-Envied Vertices Valuation: Properties (1)-(4) + Property (5)}\label{alg:non-envied-unalloc}
\KwIn{A partial orientation $X$ that is an output of Algorithm~\ref{alg:prop-4} on a bipartite multi-graph $G$, satisfying properties $(1)$-$(4)$}
\KwOut{A partial orientation $X$ satisfying properties $(1)$-$(5)$}

\BlankLine

\While{there exists a non-envied vertex $i \in [n]$ such that $v_i(U_i(X)) > v_i(X_i)$}{

    \For{every envied vertex $j \in S$ adjacent to $i$ such that $A_{j, i}(X) \neq \emptyset$}{

        Let $D$ be the the set of items currently in $A_{j, i}(X)$ 

        $X_i \gets X_i \setminus E(i, j)$

        $X_i \gets X_i \cup D$
    }

    \If{property (4) is violated}{

        Run Algorithm \ref{alg:prop-4}
    
    }

}
 \Return{$X = (X_1, \dots, X_n)$}

\end{algorithm}

\begin{lemma}\label{lem:prop-5}
    Algorithm~\ref{alg:non-envied-unalloc} terminates and its output allocation satisfies properties (1)-(5). Moreover, it runs in polynomial time if the cut configurations are given as input.
\end{lemma}

\begin{proof}
    Let $X$ and $X'$ denote the input and output orientations of Algorithm \ref{alg:non-envied-unalloc} respectively. Consider a non-envied agent $i$ violating property (5) in $X$, i.e., $v_i(U_i(X)) > v_i(X_i)$. By Claim \ref{claim:unallocated_edges}, every unallocated edge incident to $i$ belongs to $A_{j, i}(X)$ for some envied vertex $j$ such that $A_{j, i}(X) \neq \emptyset$ and $E(i, j) \setminus A_{j, i}(X)$ is allocated to $i$. Therefore, Algorithm \ref{alg:non-envied-unalloc} allocates the set $U_i(X)$ to agent $i$. Figure \ref{fig:prop-5} demonstrates the unallocated edges incident to agent $i$. Observe that agent $i$ will not strongly envy any agent in $X'$. Let $j$ be an envied agent incident to $i$ with $A_{j, i}(X) \neq \emptyset$. Now, $j$ will not envy $i$ since we have,
    \begin{align*}
        v_j(X'_i) = v_j(A_{j, i}(X)) \le v_j(X_j),
    \end{align*}
    where the last inequality follows from property (3) of $X$. Hence, $X'$ remains $\efx$. Moreover, $X'$ is an orientation, and hence property (1) remains satisfied. Since agent $i$ was non-envied in $X$ and remains non-envied in $X'$, property (3) is also clearly satisfied. One can easily observe that property (2) continues to be satisfied, by construction. Therefore, after allocating the set $U_i(X)$ to agent $i$ using the for-loop in Algorithm \ref{alg:non-envied-unalloc}, the only property that might be violated is property (4), as it might be the case that agent $i$ previously envied some agent $k \in [n]$ in $X$, but since she has been better off, she no longer envies agent $k$, and thus, agent $k$ has now become non-envied while $A_k(X) \neq \emptyset$. Therefore, we run Algorithm \ref{alg:prop-4} in these cases to ensure property (4) also remains satisfied.
    
    Since, in every iteration of the while loop of Algorithm \ref{alg:non-envied-unalloc}, the number of vertices violating property (5) decreases by one, the algorithm must eventually terminate. Hence, the output allocation $X'$ satisfies all the first five properties. Moreover, it is clear that for each non-envied agent violating property (5), the inner for-loop takes polynomial time. Therefore, since Algorithm \ref{alg:prop-4} runs in polynomial time, we have that Algorithm \ref{alg:non-envied-unalloc} terminates in polynomial time given the cut configurations as input.
\end{proof}

\begin{figure}
    \centering

    \begin{tikzpicture}
    \Vertex[label = $j$, x = 0, y = 2, color = white]{1}
    \Vertex[label = $i$, x = 2, y = 2, color = white]{2}
    \Vertex[label = $k$, x = 4, y = 2, color = white]{3}
    
    \Edge[bend=40, Direct](1)(2)
    \Edge[bend=80, Direct](1)(2)
    \Edge[bend=-40, style = dashed](1)(2)
    \Edge[bend=-80, style = dashed](1)(2)

    \Edge[bend=40, Direct](3)(2)
    \Edge[bend=80, Direct](3)(2)
    \Edge[bend=-40, style = dashed](3)(2)
    \Edge[bend=-80, style = dashed](3)(2)
    
    \end{tikzpicture}

    \caption{A demonstration of a non-envied agent $i$ violating property (5) after satisfying the first four properties. Agents $j$ and $k$ are envied vertices adjacent to $i$ such that $A_{j, i}(X) \neq \emptyset$ and $A_{k, i}(X) \neq \emptyset$. Dashed edges correspond to unallocated items.}
    \label{fig:prop-5}
\end{figure}

Note that Claim \ref{claim:unallocated_edges} still holds since properties (1)-(4) are satisfied.

\subsection{Satisfying Property (6)}\label{sec:safe-set}
We now finally focus on satisfying property (6) where for any envied agent $i \in [n]$ who is envied by $j$ must be such that $j \in S_i(X)$. 

Algorithm~\ref{alg:safe-set-monotone} begins by identifying a pair of agents $(i,j)$ in $X$ where $i$ is envied by $j$ and $j \notin S_i(X)$ and swapping the bundles they possess from the \emph{$j$-cut} configuration of the set $E(i, j)$. Then, it allocates the set $A_i(X)$ to agent $i$ as well. We will show (in Lemma \ref{lem:safe-set-monotone}) that the above procedure satisfies property (6) while maintaining our five previous properties.

\begin{algorithm}
\caption{Safe Set: Properties (1)-(5)+Property (6)}\label{alg:safe-set-monotone}
\KwIn{Allocation X satisfying properties $(1)$-$(5)$}
\KwOut{Allocation X satisfying properties $(1)$-$(6)$}

\BlankLine

\While{there exists an $i \in [n]$ who is envied by $j \notin S_i(X)$}{

    Let the partition $(C_1, C_2)$ be the \emph{$j$-cut} configuration of the set $E(i, j)$. 

    Swap the bundles $C_1$ and $C_2$ between agents $i$ and $j$.

    $X_i \gets X_i \cup A_i(X)$

    \If{property (4) is violated}{

        Run Algorithm \ref{alg:prop-4}
    
    }

    \If{property (5) is violated}{

        Run Algorithm \ref{alg:non-envied-unalloc}
    
    }

}
 \Return{$X = (X_1, \dots, X_n)$}
\end{algorithm}

\begin{lemma}\label{lem:safe-set-monotone}
    Algorithm~\ref{alg:safe-set-monotone} terminates and outputs a partial allocation that satisfies properties (1)-(6). Moreover, it runs in polynomial time if the cut configurations are given as input.
\end{lemma}

\begin{proof}
    Consider a single iteration of the while loop of Algorithm \ref{alg:safe-set-monotone}. Let $i \in [n]$ be the agent violating property (6) and $j$ be the agent who envies $i$. Let $X$ and $X'$ be the allocations before and after this iteration, respectively. We have,
    $$v_i (X_i') = v_i( X_j \cup A_i(X)) > v_i(X_i),$$ where the last inequality comes from the fact that agent $i$ was an agent violating property (6) previously. Note that, agent $i$ is now better off and does not envy any other agent. During the execution of Lines 2-4 (in Algorithm~\ref{alg:safe-set-monotone}), we swap the bundles $C_1$ and $C_2$ between agents $i$ and $j$, where $(C_1, C_2)$ is the \emph{$j$-cut} configuration of the set $E(i, j)$ and add $A_i(X)$ to $X_i$. Therefore, agents $i$ and $j$ are better off and are both non-envied, and the allocation remains $\efx$. Using a similar argument as in the proof of Lemma \ref{lem:prop-5}, we have that $X'$ satisfies properties (1)-(3).

    Since agent $j$ is now better off, it might be the case that agent $j$ previously envied some agent $k \in [n]$ in $X$, but since she has been better off, she no longer envies agent $k$, and thus, agent $k$ has now become non-envied while $A_k(X) \neq \emptyset$. Therefore, we run Algorithm \ref{alg:prop-4} in these cases to ensure property (4) also remains satisfied. Moreover, it might be the case that agent $k$ violates property (5) after executing line 6. Therefore, the second if statement runs Algorithm \ref{alg:non-envied-unalloc} to ensure property (5) is also satisfied. Note that executing Algorithm~\ref{alg:non-envied-unalloc} maintains the first five properties.
    
    Now, since each iteration of the algorithm decreases the number of envied vertices strictly, we know that Algorithm~\ref{alg:safe-set-monotone} terminates in at most $n$ iterations. Moreover, given the cut configurations as input, since Algorithms \ref{alg:prop-4} and \ref{alg:non-envied-unalloc} both terminate in polynomial time, Algorithm \ref{alg:safe-set-monotone} also terminates in polynomial time.
\end{proof}

Notice that Claim \ref{claim:unallocated_edges} still holds since properties (1)-(4) are satisfied.

\subsection{Allocating the Remaining Items} \label{sec:complete-allocation}

For a given bipartite multi-graph, we execute Algorithms~\ref{alg:greedy}, \ref{alg:prop-4}, \ref{alg:non-envied-unalloc}, and \ref{alg:safe-set-monotone} (in that order) and reach a desired partial $\efx$ orientation $X$ that satisfies properties (1)-(6). What is remaining is to now make $X$ complete by assigning the unallocated items while maintaining $\efx$ guarantees. We will show that the six properties of $X$ make it easy to do so. By Claim \ref{claim:unallocated_edges}, we know that the only unallocated edges in $X$ are the set $A_i(X)$ for some envied vertex $i \in [n]$. Let $j \in [n]$ be the agent who envies $i$. We will allocate $A_i(X)$ to agent $j$. The procedure is formalized in Algorithm \ref{alg:final}. We will prove that this algorithm outputs an $\efx$ allocation.

\begin{algorithm}
\caption{Finalize Allocation}\label{alg:final}
\KwIn{Allocation X satisfying properties $(1)$-$(6)$}
\KwOut{An $\efx$ allocation}

\While{there exists an envied agent $i \in [n]$ such that $A_i (X) \neq \emptyset$}{

    Let $j$ be the agent who envies $i$.

    $X_j \gets X_j \cup A_i(X)$
    
}
 \Return{$X = (X_1, \dots, X_n)$}
\end{algorithm}

\begin{lemma}\label{lem:final}
    Algorithm~\ref{alg:final} terminates and outputs a complete $\efx$ allocation. Moreover, it runs in polynomial time if the cut configurations are given as input.
\end{lemma}

\begin{proof}
    To begin with, note that Algorithm~\ref{alg:final} terminates in polynomial time if the cut configurations are given as input, since in each iteration at least one edge will be allocated. Moreover, by Claim \ref{claim:unallocated_edges}, it is clear Algorithm \ref{alg:final} outputs a complete allocation $X$. Hence, we only need to prove that $X$ is $\efx$. 

    Consider an agent $j \in [n]$ who envies agents $i_1, \cdots, i_l$. After the execution of Algorithm \ref{alg:final}, this agent has received the set $X_j \cup A_{i_1}(X) \cup \cdots \cup A_{i_l}(X)$. It suffices to show that no agent envies agent $j$. Observe that $i_1, \cdots, i_l \in S$ and $j \in T$, and also each set $A_{i_p}(X)$ is only valuable to agent $i_p \in S$ and some other agent, called $j_p \in T$. Now, consider an arbitrary agent $k$. If for every $p \in [l]$, $k \neq j_p$ and $k \neq i_p$, then we have that the set $A_{i_1}(X) \cup \cdots \cup A_{i_l}(X)$ is not valuable for agent $k$. Since agent $j$ was previously non-envied, we have $v_k(X_j \cup A_{i_1}(X) \cup \cdots \cup A_{i_l}(X)) = v_k(X_j) \le v_k(X_k)$, meaning that $k$ does not envy $j$. Now assume $k = j_p$ for some $p \in [l]$. Observe that $$v_k(A_{i_1}(X) \cup \cdots \cup A_{i_l}(X)) = v_k(A_{i_p}(X)) \le v_k(U_k(X)),$$ where the last inequality comes from the fact that $A_{i_p}(X) \subseteq U_k(X)$. Moreover, note that $k \in T$ and since the allocation prior to the execution of Algorithm \ref{alg:final} was an orientation, $j$ has not received any item valuable to agent $k$. Thus, $$v_k(X_j \cup A_{i_1}(X) \cup \cdots \cup A_{i_l}(X)) = v_k(A_{i_p}(X)) \le v_k(U_k(X)) \le v_k(X_k),$$ where the last inequality comes from property (5). Now assume $k = i_p$ for some $p \in [l]$. We have $$v_k(X_j \cup A_{i_1}(X) \cup \cdots \cup A_{i_l}(X)) = v_k(X_j \cup A_{i_p}(X))\le v_k(X_k),$$ where the last inequality comes from property (6). Therefore, in either case, agent $k$ will not envy agent $j$ and the allocation remains $\efx$.
\end{proof}

Now, we are ready to prove our main result. 

\begin{theorem} \label{thm:multi-bipartite-main}
    For any fair division instance on a bipartite multi-graph with monotone valuations, $\efx$ allocations always exist and can be computed in pseudo-polynomial time. Moreover, if the valuations are cancelable, such allocations can be computed in polynomial time.
\end{theorem}

\begin{proof}
    We run Algorithms \ref{alg:greedy}, \ref{alg:prop-4}, \ref{alg:non-envied-unalloc}, \ref{alg:safe-set-monotone}, and \ref{alg:final} in the mentioned order to obtain our $\efx$ allocation. Regarding the running time, by Lemmas \ref{lem:greedy}, \ref{lem:runtime-prop(4)}, \ref{lem:prop-5}, \ref{lem:safe-set-monotone}, and \ref{lem:final}, we can compute our allocation in polynomial time if the configurations are given as input. Note that, using Remark \ref{rem:conf-runtime}, we can compute the cut configurations in polynomial time for cancelable valuations and in pseudo-polynomial time for monotone valuations, which completes our proof.
\end{proof}

Now, as a corollary, we obtain the following theorem that says that we can compute orientations that are $1/2$-$\efx$ in polynomial time for additive valuations.

\begin{restatable}{theorem}{approxefxthm}\label{thm:1/2-efx}
    For any fair division instance on a bipartite multi-graph with additive valuations, there always exists an orientation where at least $\lceil\frac{n}{2}\rceil$ of agents are $\efx$ and the remaining agents are $\frac{1}{2}$-$\efx$. Furthermore, such an orientation can be computed in polynomial time.
\end{restatable}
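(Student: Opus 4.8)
The plan is to reuse the partial $\efx$ orientation $X$ produced by \Cref{alg:greedy,alg:prop-4,alg:safe-set}, which satisfies properties (1)--(5), and to complete it into a \emph{full orientation} (rather than into the wasteful allocation of \Cref{sec:complete-allocation}) by pushing every unallocated bundle onto its non-envied endpoint. By \Cref{claim:unallocated_edges}, every still-unallocated bundle is some $A_{k,i}(X)$ with $k$ envied (hence $k\in S$ by \Cref{lem:envy-S}) and $i$ non-envied (hence $i\in T$), and moreover $E(k,i)\setminus A_{k,i}(X)\subseteq X_i$. I would set $\widehat X$ to be $X$ together with $A_{k,i}(X)$ added to $X_i$ for every such pair; this is a legal orientation since $A_{k,i}(X)\subseteq E(k,i)$ is incident to $i$. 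Crucially, I would first fix the bipartition so that the part $S$ that picks first (and which, by \Cref{lem:envy-S}, contains every envied vertex) is the \emph{smaller} side; choosing the smaller side of each connected component guarantees $|S|\le\lfloor n/2\rfloor$.

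Next I would argue that all agents in $T$, together with the non-envied vertices of $S$, are (fully) $\efx$ in $\widehat X$. For $j\in T$: its own bundle only grows while no $S$-bundle changes, so since $X$ was $\efx$ (property (1)) no strong envy can appear, and $T$-vertices value each other at $0$. For a non-envied $k\in S$: the only edges handed out of $E(k,i)$ to a neighbour $i$ are exactly $E(k,i)\setminus X_k\subseteq U_k(X)$, and \Cref{claim:non-envied-I} gives $v_k(U_k(X))\le v_k(X_k)$; since $v_k(\widehat X_i)$ only counts edges of $E(k,i)$, this yields $v_k(X_k)\ge v_k(\widehat X_i)$, so $k$ does not even envy $i$.

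The heart of the argument is showing that each envied $k\in S$ is $\tfrac12$-$\efx$ in $\widehat X$. Let $j_k$ be the unique neighbour envying $k$ (\Cref{lem:envied-singleton}), so $X_k\subseteq E(k,j_k)$; toward $j_k$ (whose $E(k,j_k)$-edges are untouched) and toward any neighbour that received no extra edge, $k$ is unchanged and hence $\efx$. The only new threat is a neighbour $i\in T$ that receives $A_{k,i}(X)$ and thereby ends up holding all of $E(k,i)=C_1\cup C_2$, where $(C_1,C_2)$ is the $i$-cut of $E(k,i)$ and $C_2=A_{k,i}(X)$. Property (3) already gives $v_k(X_k)\ge v_k(C_2)$. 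The missing (and hardest) bound is $v_k(X_k)\ge v_k(C_1)$: I would obtain it from the history of \Cref{alg:greedy}, observing that when $k$ took its turn the whole of $E(k,i)$ was still unallocated (only $k$ and $i$ share it and $i$ is picked later), so $k$ could have grabbed its preferred bundle of the $i$-cut, of value $\max\{v_k(C_1),v_k(C_2)\}$; since $k$ instead kept $X_k$ and an envied vertex's bundle is never diminished by \Cref{alg:prop-4} or \Cref{alg:safe-set}, we get $v_k(X_k)\ge\max\{v_k(C_1),v_k(C_2)\}$. Then for any $g$ we have $v_k(E(k,i)\setminus g)\le v_k(C_1)+v_k(C_2)\le 2\,v_k(X_k)$, i.e.\ $k$ is $\tfrac12$-$\efx$ toward $i$.

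Finally I would assemble the count. The only vertices that can fail to be fully $\efx$ are the envied ones, all lying in $S$, so their number is at most $|S|\le\lfloor n/2\rfloor$; hence at least $\lceil n/2\rceil$ agents are $\efx$ and the remaining agents are $\tfrac12$-$\efx$. Every step---running the three algorithms, the single completion sweep, and the verification---is polynomial, giving the efficiency claim. I expect the main obstacle to be precisely the inequality $v_k(X_k)\ge v_k(C_1)$: it is the one place where property (3) alone is insufficient and one must reason about the greedy picking order together with the fact that \Cref{alg:prop-4,alg:safe-set} never shrink an envied vertex's bundle.
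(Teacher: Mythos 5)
Your proposal is correct and structurally identical to the paper's proof: take the partial $\efx$ orientation $X$ satisfying properties (1)--(5) from Algorithms~\ref{alg:greedy}, \ref{alg:prop-4} and \ref{alg:safe-set}, hand each unallocated bundle (which, by Claim~\ref{claim:unallocated_edges} and Lemma~\ref{lem:envy-S}, joins an envied $k\in S$ to a non-envied $i\in T$) to its non-envied endpoint, and count via $|S|\le|T|$. The one real divergence is the step you flag as the main obstacle, $v_k(X_k)\ge v_k(C_1)$. You obtain it dynamically, by replaying Algorithm~\ref{alg:greedy} (at $k$'s turn all of $E(k,i)$ was unallocated, so $k$ could have taken the better half of the $i$-cut) and then arguing that Algorithms~\ref{alg:prop-4} and \ref{alg:safe-set} never decrease the value of a still-envied vertex's bundle. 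That is valid, but heavier than needed: the paper's argument is purely static. Since $C_1\subseteq X_i$ and $i$ is non-envied in $X$, no agent envies $i$ --- in particular $k$ does not --- so $v_k(C_1)\le v_k(X_i)\le v_k(X_k)$ in one line; the paper then adds this to $v_i(C)\le v_i(X_i)$ (it uses property (5) where your property (3) works just as well) to get $v_i(X_i)\ge\tfrac12 v_i(X_j\cup C)$, which bounds the entire post-completion bundle at once. The static route buys independence from the algorithms' execution history, whereas yours additionally relies on the (true, but verification-requiring) invariants that only non-envied vertices gain items in Algorithm~\ref{alg:prop-4} and that a vertex processed by Algorithm~\ref{alg:safe-set} never becomes envied again. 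One small imprecision to repair in your write-up: for the agents you declare fully $\efx$, ``bundle unchanged, hence $\efx$'' is not by itself sufficient when the observed neighbour gains edges that are worthless to the observer, because removing such a worthless edge leaves the whole old bundle intact; one needs \emph{no envy} (not merely no strong envy) toward that neighbour. This is exactly what non-enviedness of every receiver supplies ($v_k(X_i)\le v_k(X_k)$), and it is also the clean way to close your case of a non-envied $k\in S$, where your chain via Claim~\ref{claim:non-envied-I} does not literally yield $v_k(X_k)\ge v_k(\widehat{X}_i)$ without this observation (in fact, for such $k$ one can check $U_k(X)=\emptyset$, so nothing incident to $k$ is ever handed out).
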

\begin{proof}
    For a given instance, let $S = \{i_1, i_2, \ldots, i_{|S|}\}$ and $T = \{j_1, j_2, \ldots, j_{|T|}\}$ be two parts of the given bipartite multi-graph. Without loss of generality, we assume assume $|S| \leq |T|$. Let $X$ be the partial orientation obtained after executing Algorithms~\ref{alg:greedy} and \ref{alg:prop-4} (in that order). Note that the allocation $X$ satisfies properties (1)-(4).

     Using Claim \ref{claim:unallocated_edges}, we know that the only unallocated bundles in $X$ are between an envied agent $i$ and a non-envied agent $j$, and the set $E(i,j) \setminus A_{i,j}(X)$ is allocated to $j$. So, let us consider such a pair of agents $(i,j)$ with an unallocated set $C$ adjacent to them. Note that after satisfying property (3), we have $v_i(C) \leq v_i(X_i)$. Since $i$ does not envy $j$ we also have $v_i(X_j) \leq v_i(X_i)$. Combining the last inequalities, we have 
     \begin{align}
         v_i(X_i) \geq \frac{1}{2} v_i(X_j \cup C)
     \end{align}
     We now allocate $C$ to $j$, and by above inequality, agent $i$ is at least $\frac{1}{2}$-$\efx$ with $j$'s new bundle. 
     
    We do this for every such envied and non-envied pair of agents and obtain a full orientation (say, $Y$) for the given multi-graph. Therefore, every agent in $T$ is $\efx$-satisfied in $Y$ while, agents in  $S$ are $\frac{1}{2}$-$\efx$ satisfied in $Y$. Since $|S| \leq |T|$, the stated claim stands proven. Finally, using Remark \ref{rem:conf-runtime}, the cut configurations can be computed in polynomial time for additive valuations, and hence polynomial-time computability of such orientations follow.
\end{proof}
\section{Further Improvements and Limitations}
\label{sec:improve}

Theorem \ref{thm:multi-bipartite-main} motivates the question of what happens if the graph skeleton contains cycles of odd length. In this section, we extend our techniques to prove that any multi-cycle instance with $\mms$-feasible valuations admits an $\efx$ allocation (Theorem~\ref{thm:multi-cycle}). This demonstrates the power of our technique while providing insight and strong hope for potentially proving the existence of $\efx$ on general multi-graphs.

\begin{restatable}{theorem}{multicycleefxthm}\label{thm:multi-cycle}
     For any fair division instance on multi-cycles with $\mms$-feasible valuations, $\efx$ allocations always exist and can be computed in pseudo-polynomial time.
\end{restatable}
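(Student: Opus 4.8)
The plan is to bootstrap the bipartite result of Theorem~\ref{thm:multi-bipartite-main}. If the cycle has even length, its skeleton is bipartite and the statement is immediate from Theorem~\ref{thm:multi-bipartite-main}, so the only genuinely new case is an odd multi-cycle on vertices $1,2,\dots,n$ with $n$ odd. The key structural observation is that an odd cycle is \emph{almost} bipartite: 2-coloring greedily around the cycle as $1\in S,\,2\in T,\,3\in S,\dots,\,n\in S$ leaves exactly one \emph{monochromatic} edge-bundle, namely $E(n,1)$ (both its endpoints land in $S$), while every other bundle $E(k,k+1)$ crosses between $S$ and $T$. Thus the instance differs from a genuine bipartite multi-graph only in this single bundle, and the strategy is to isolate that bundle and reuse the rest of the machinery unchanged.

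First I would \emph{freeze} the monochromatic bundle out of the way. Using Lemma~\ref{lem:conf-poly}, let agent $1$ partition $E(n,1)$ into two $\efx$-feasible bundles; let agent $n$ take whichever she prefers and give the remainder to $1$. This standalone cut-and-choose guarantees that neither of $1,n$ strongly envies the other on account of $E(n,1)$. I would then run Algorithms~\ref{alg:greedy}, \ref{alg:prop-4}, and \ref{alg:safe-set} exactly as in Section~\ref{sec:bipartite-allocation}, treating both $1$ and $n$ as members of $S$ that already hold their frozen halves, and allocating every remaining (cross) bundle via the usual $T$-cut configurations. Finally, I would complete the allocation by assigning the leftover items to a safe third agent, exactly as in Section~\ref{sec:complete-allocation}.

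The crux is to verify that introducing the single same-side pair $(1,n)$ does not break any of the five properties, nor the invariant of Lemma~\ref{lem:envy-S} that every envied vertex lies in $S$. This should go through because the frozen edges $E(n,1)$ are valued only by $1$ and $n$: any $T$-neighbor of $1$ or $n$ assigns them value zero, so freezing cannot trigger strong envy from outside the pair, and since $S$ still picks before $T$ no $T$-vertex is ever envied. Within the pair, each of $1,n$ holds an $\efx$-feasible half of $E(n,1)$, while any further cross edges they acquire only raise their own value and are invisible to the other, so neither strongly envies the other throughout the run. Because $E(n,1)$ is fully allocated at the outset, the only leftover edges live on cross bundles, precisely as in the bipartite analysis, so the completion step of Section~\ref{sec:complete-allocation} applies verbatim. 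The main technical obstacle is exactly this bookkeeping: re-checking that the safe-set property (5) and the inequality $v_i(U_i(X))\le v_i(X_i)$ of Claim~\ref{claim:non-envied-I} survive in the presence of the monochromatic bundle, since these are the properties that license the final (possibly wasteful) assignment and hence the $\efx$ guarantee.
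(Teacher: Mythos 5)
Your reduction breaks at exactly the point the paper isolates as its hardest case. The claim ``any further cross edges they acquire only raise their own value and are invisible to the other, so neither strongly envies the other throughout the run'' is false. Cut-and-choose on $E(n,1)$ only guarantees that agent $1$ (the cutter) is $\efx$-feasible with respect to the bundle $E(n,1)$ \emph{in isolation}; it does not prevent $1$ from \emph{weakly} envying the half that $n$ chooses. This weak envy is harmless only as long as $X_n$ contains nothing else. The moment agent $n$ picks up any cross bundle in Algorithm~\ref{alg:greedy} (or receives an available set while satisfying property (4)), the envy becomes strong: for a cross item $g \in X_n$ we have $v_1(X_n \setminus g) = v_1(X_n \cap E(n,1))$, i.e.\ the \emph{full} frozen half with no item removed, so the ``up to any item'' slack that cut-and-choose provided is destroyed by items that are invisible to agent $1$. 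Concretely, let $E(n,1)$ consist of three items each worth $1$ to both agents; the cut of Lemma~\ref{lem:conf-poly} is a $2$-item half versus a $1$-item half, both agents prefer the $2$-item half, $n$ takes it, and if agent $1$'s cross edges are worth less than $1$ to her, then after $n$'s first greedy pick agent $1$ strongly envies $n$. So Lemma~\ref{lem:greedy} already fails in your modified run, and the later stages inherit the problem: the envier now lies in $S$, which contradicts the invariant behind Lemma~\ref{lem:envy-S}, and Lemma~\ref{lem:safe-set} explicitly needs the envier to be in $T$ to perform its swap.

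This is precisely why the paper's proof splits into two cases. Its Case 1 (some adjacent pair prefers \emph{different} halves of a configuration of $E(i,j)$) is morally your freezing argument and works, although the paper removes $E(i,j)$, runs the bipartite algorithm with \emph{both} endpoints placed in $T$ (so both end up non-envied), and only attaches the two halves at the very end. Its Case 2 (every adjacent pair prefers the \emph{same} half in both configurations) cannot be handled by any freeze of a single bundle: whichever endpoint loses the contested half will weakly envy the other, and that envy turns strong as soon as the envied endpoint receives anything further. The paper instead deletes two adjacent vertices $i,j$ entirely, runs the bipartite machinery on the remaining even path (making the far neighbors $j', i'$ non-envied), and then distributes the three removed bundles $E(j',j)$, $E(j,i)$, $E(i,i')$ by a six-way case analysis that sometimes assigns pieces \emph{wastefully} to $j'$ or $i'$ — a step with no analogue in your proposal. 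To repair your argument you would need to add this second case (or an equivalent mechanism), since the same-preference situation is unavoidable, e.g.\ when all items in $E(n,1)$ are identical.
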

\begin{proof}
    To begin with, note that multi-cycles with even length are bipartite graphs, for which Theorem~\ref{thm:multi-bipartite-main} proves the existence of $\efx$ allocations. Therefore, we only consider odd-length multi-cycles with at least five vertices.\footnote{Existence of $\efx$ allocation on odd-length multi-cycles with three vertices is known \cite{akrami2025efx} for $\mms$-feasible valuations.} To prove the stated theorem, we distinguish between two main cases, which are further divided into different sub-cases. Here, for every pair of adjacent agents $i$ and $j$, we define both the \emph{$i$-cut} and \emph{$j$-cut} configurations of the set $E(i, j)$ in the same manner as before (see Definition~\ref{def:configuration}).

    \begin{itemize}
        \item \textbf{Case 1. There exists a pair of adjacent agents $i, j \in [n]$ such that they prefer different bundles in one of the configurations of $E(i,j)$.} Without loss of generality, we assume $(C_1, C_2)$ is the \emph{$i$-cut} configuration of $E(i, j)$ where $v_i(C_1) \ge v_i(C_2)$ and $v_j(C_2) \ge v_j(C_1)$. In this case, we remove the items of $E(i, j)$ from the graph, which makes the remaining graph a multi-$P_n$. That is, it becomes a bipartite multi-graph. Since the length of this path is even, we can define the sets $S$ and $T$ of the vertices in the bipartite graph so that both $i$ and $j$ belong to the set $T$. Let $X$ be the $\efx$ allocation of this multi-bipartite graph using the sets $S$ and $T$ as described. Therefore, using Lemma~\ref{lem:envy-S}, we know that agents $i$ and $j$ are both non-envied agents in $X$. Now, we add the edges $E(i,j)$ and allocate $C_1$ to $i$ and $C_2$ to $j$. First, note that agents $i$ and $j$ do not envy each other based on the assumption of our case. Furthermore, they remain non-envied, and the resulting allocation is $\efx$.
        
        \item \textbf{Case 2. For every pair of adjacent agents $i, j \in [n]$, they prefer the same bundle in both configurations of the set $E(i,j)$.} Let us assume $j', j, i, i'$ to be four consecutive agents in the cycle. We remove the agents $i$ and $j$ and their incident edges from the graph. Then, the remaining graph is a multi-$P_{n-2}$ with even length. Therefore, following the same argument as in the previous case, allocation $X$ exists in the remaining graph that is $\efx$, and both agents $j'$ and $i'$ are non-envied. 
        Let $(C_1, C_2)$, $(D_1, D_2)$, and $(E_1, E_2)$ be the \emph{$j'$-cut}, \emph{$i$-cut}, and \emph{$i'$-cut} for the sets $E(j', j)$, $E(j, i)$, and $E(i, i')$ respectively. Without loss of generality, by the case definition, we assume the following:
        \begin{align*}
            v_{j'}(C_1) \ge v_{j'}(C_2), \text{ and } v_{j}(C_1) \ge v_{j}(C_2) \\
            v_{j}(D_1) \ge v_{j}(D_2), \text{ and } v_{i}(D_1) \ge v_{i}(D_2) \\
            v_{i}(E_1) \ge v_{i}(E_2), \text{ and } v_{i'}(E_1) \ge v_{i'}(E_2)
        \end{align*}
        We now consider the following cases. (In each case, we allocate the sets $C_1, C_2, D_1, D_2, E_1, E_2$ to agents $j', j, i, i'$. Note that for each case, one can easily check that agents $j', j, i, i'$ do not strongly envy each other. Furthermore, agents $j'$ and $i'$ remain non-envied.)

        \begin{itemize}
            \item \textbf{Case 2.1.1. $v_j(C_2 \cup D_2) \ge max(v_j(C_1), v_j(D_1))$ and $v_i(D_1 \cup E_2) \ge v_i(E_1)$:} 

            We allocate $C_1$ to $j'$, $C_2 \cup D_2$ to $j$, $D_1 \cup E_2$ to $i$, and $E_1$ to $i'$. 
            
            \item \textbf{Case 2.1.2. $v_j(C_2 \cup D_2) \ge max(v_j(C_1), v_j(D_1))$ and $v_i(D_1 \cup E_2) < v_i(E_1)$:}

            We allocate $C_1$ to $j'$, $C_2 \cup D_2$ to $j$, $E_1$ to $i$, and $D_1 \cup E_2$ to $i'$.
            
            \item \textbf{Case 2.2.1. $v_j(C_1) \ge max(v_j(C_2 \cup D_2), v_j(D_1))$ and $v_i(D_1 \cup E_2) \ge v_i(E_1)$:}

            We allocate $C_2 \cup D_2$ to $j'$, $C_1$ to $j$, $D_1 \cup E_2$ to $i$, and $E_1$ to $i'$.
            
            \item \textbf{Case 2.2.2. $v_j(C_1) \ge max(v_j(C_2 \cup D_2), v_j(D_1))$ and $v_i(D_1 \cup E_2) < v_i(E_1)$:}

            We allocate $C_2 \cup D_2$ to $j'$, $C_1$ to $j$, $E_1$ to $i$, and $D_1 \cup E_2$ to $i'$.
            
            \item \textbf{Case 2.3.1. $v_j(D_1) \ge max(v_j(C_2 \cup D_2), v_j(C_1))$ and $v_i(D_2 \cup E_2) \ge v_i(E_1)$:}

            We allocate $C_1$ to $j'$, $D_1$ to $j$, $D_2 \cup E_2$ to $i$, and $C_2 \cup E_1$ to $i'$.
            
            \item \textbf{Case 2.3.2. $v_j(D_1) \ge max(v_j(C_2 \cup D_2), v_j(C_1))$ and $v_i(D_2 \cup E_2) < v_i(E_1)$:}  

            We allocate $C_1$ to $j'$, $D_1$ to $j$, $E_1$ to $i$, and $C_2 \cup D_2 \cup E_2$ to $i'$.
        \end{itemize}
    \end{itemize}

Therefore, we can find an $\efx$ allocation in multi-cycles of odd length, thereby completing our proof. The running time of this procedure can be concluded from the running time of our algorithm for bipartite multi-graphs (Theorem \ref{thm:multi-bipartite-main}).
\end{proof}

Note that our proof works for monotone valuations if the length of the multi-cycle is at least four. Moreover, it is clear to see that it terminates in polynomial time for cancelable valuations. Therefore, we can conclude the following:

\begin{corollary}\label{cor:cycle}
         For any fair division instance on multi-cycles with at least four agents and monotone valuations, $\efx$ allocations always exist and can be computed in pseudo-polynomial time. Moreover, if the valuations are cancelable, such allocations can be computed in polynomial time.
\end{corollary}

\paragraph{Why do our techniques fail for general multi-graphs?}
Theorems \ref{thm:multi-bipartite-main} and \ref{thm:multi-cycle} make it hopeful that one can adapt/modify our techniques to prove the existence of $\efx$ allocations in general multi-graphs. In our proof, we never had to deal with the case (after Algorithm~\ref{alg:greedy}) where there were two envied vertices $i$ and $j$ in the graph such that no edge from $E(i,j)$ was allocated. This turns out to be the most complicated case for the general multi-graph structure. As a solution concept, one can aim to achieve a partial orientation with $|S_i(X) \cap S_j(X)| \ge 2$ for any adjacent envied vertices $i,j$. If this happens, we can let $(C_1, C_2)$ be the \emph{$i$-cut} configuration of items $E(i, j)$ and then allocate $C_1$ and $C_2$ to two different vertices in $S_i(X) \cap S_j(X)$. Also, we believe that we have to use both configurations \emph{$i$-cut} and \emph{$j$-cut} for every pair of adjacent vertices $i,j$ for extending our result to general multi-graphs. 

\vspace{6pt}
\begin{mdframed}
    \textit{Conjecture:} Any fair division instance on a multi-graph admits an $\efx$ allocation.
\end{mdframed}


\section{Conclusion}

In this work, we study a model that captures the setting where every item is relevant to at most two agents and any two agents can have multiple relevant items in common (represented by multi-graphs). We prove the existence of $\efx$ allocations for fair division instances on bipartite multi-graphs with monotone valuations and multi-cycles with $\mms$ feasible valuations, and that they can be computed in pseudo-polynomial time. Moreover, for bipartite multi-graphs and multi-cycles with length at least four, they can be computed in polynomial time for cancelable valuation functions. An immediate question for future research work is to understand $\efx$ allocations on general multi-graphs, as discussed in Section~\ref{sec:improve}. 

The non-existence of $\efx$ orientations on bipartite multi-graphs implies that some degree of \emph{wastefulness} is inherent in $\efx$ allocations for such instances. In this work, we prove the existence of orientations that are $\efx$ for half the agents and $1/2$-$\efx$ for the remaining agents for instances represented via a bipartite multi-graph with additive valuations. Another immediate question is, therefore, to improve this factor. Finally, it would be interesting to see what one can say about approximating social welfare or Nash social welfare of $\efx$ allocations to understanding the trade-offs with fairness and efficiencies in multi-graph instances.

Ultimately, we hope that the insights gained from exploring $\efx$ allocations in the multi-graph setting will contribute to advancements in the broader challenge of understanding $\efx$ allocations, in general.
\section*{Acknowledgments}
N.\ R.\ is supported by the Lise Meitner Postdoctoral Fellowship (2023-25), by the Max Planck Institute for Informatics, SIC, Germany. A part of the work was done when M.\ A.\ and M.\ K.\ were interns at Max Planck Institute for Informatics, SIC, Germany.

\bibliographystyle{abbrvnat}
\bibliography{ref}

\begin{appendix}
    \section{Finding EFX allocation on bipartite multi-graph with additive valuations: An Example of the Algorithm in Section \ref{sec:monotone}}\label{appendixA}

    In this section, we depict the execution of our algorithm for computing an $\efx$ allocation on bipartite multi-graphs on an input instance with additive valuations. We present an example with additive valuations for the ease of understanding. 

    Figure \ref{fig:example-1} depicts our input instance. Note that the instance is symmetric, and the numbers on edges specifies the value of that item for both endpoint agents. Note that, in this instance, $q = 2$ and the cut configurations are partitions into two bundles where the size of each bundle is at most one. As we run our algorithm, if we direct an edge towards one of the endpoint agents, it means that the edge is being allocated to that agent. Also, the set of envied vertices are colored blue.

    \begin{figure} [h]
    \centering
    \begin{tikzpicture}
    
    \Vertex[label = 1, x = 2, y = 2, color = white]{1}
    \Vertex[label = 2, x = 4, y = 2, color = white]{2}
    \Vertex[label = 3, x = 6, y = 2, color = white]{3}
    \Vertex[label = 4, x = 8, y = 2, color = white]{4}
    \Vertex[label = 5, x = 7, y = 4, color = white]{5}
    \Vertex[label = 6, x = 1, y = 4, color = white]{6}
    \Vertex[label = 7, x = 5, y = 0, color = white]{7}
    
    \Edge[label = $10$, bend=5](1)(5)
    \Edge[label = $10$, bend=10](2)(5)
    \Edge[label = $9$, bend=10](5)(2)
    \Edge[label = $8$, bend=0](5)(3)
    \Edge[label = $6$, bend=10](1)(6)
    \Edge[label = $5$, bend=-10](1)(6)
    \Edge[label = $6$, bend=10](2)(6)
    \Edge[label = $6$, bend=-10](2)(6)
    \Edge[label = $7$, bend=5](6)(3)
    \Edge[label = $6$, bend=10](1)(7)
    \Edge[label = $5$, bend=-10](1)(7)
    \Edge[label = $6$, bend=10](2)(7)
    \Edge[label = $6$, bend=-10](2)(7)
    \Edge[label = $7$, bend=0](7)(3)
    \Edge[label = $3$, bend=5](7)(4)
    \Edge[label = $4$, bend=-5](7)(4)
    \Edge[label = $6$, bend=10](5)(4)
    \Edge[label = $3$, bend=10](4)(5)
    
    \end{tikzpicture}
    
    \caption{An instance of a bipartite multi-graph with bi-partitions $S = \{1,2,3,4\}$ and $T = \{5,6,7\}$. We will depict the three main steps of our algorithm for finding an $\efx$ allocation using this example. The numbers on the edges depict the value of that edge for both endpoints.}
    \label{fig:example-1}
\end{figure}
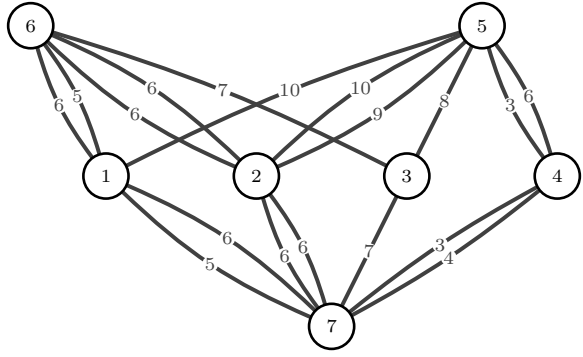

We begin by executing Algorithm \ref{alg:greedy} on our input instance to satisfy properties (1)-(3). The resulting allocation is depicted in Figure \ref{fig:example-2}. Next, we run Algorithm \ref{alg:prop-4} to satisfy property (4). The resulting allocation is depicted in Figure \ref{fig:example-3}. It is easy to verify that the said properties are indeed satisfied.

\begin{figure}[h]
    \centering
    \begin{tikzpicture}
    
    \Vertex[label = 1, x = 2, y = 2]{1}
    \Vertex[label = 2, x = 4, y = 2]{2}
    \Vertex[label = 3, x = 6, y = 2, color = white]{3}
    \Vertex[label = 4, x = 8, y = 2, color = white]{4}
    \Vertex[label = 5, x = 7, y = 4, color = white]{5}
    \Vertex[label = 6, x = 1, y = 4, color = white]{6}
    \Vertex[label = 7, x = 5, y = 0, color = white]{7}
    
    \Edge[label = $10$, bend=-5, Direct](5)(1)
    \Edge[label = $10$, bend=-10, Direct](5)(2)
    \Edge[label = $9$, bend=-10, Direct](2)(5)
    \Edge[label = $8$, bend=0, Direct](5)(3)
    \Edge[label = $6$, bend=10](1)(6)
    \Edge[label = $5$, bend=-10](1)(6)
    \Edge[label = $6$, bend=10](2)(6)
    \Edge[label = $6$, bend=-10](2)(6)
    \Edge[label = $7$, bend=-5, Direct](3)(6)
    \Edge[label = $6$, bend=10](1)(7)
    \Edge[label = $5$, bend=-10](1)(7)
    \Edge[label = $6$, bend=10](2)(7)
    \Edge[label = $6$, bend=-10](2)(7)
    \Edge[label = $7$, bend=0, Direct](3)(7)
    \Edge[label = $3$, bend=5](7)(4)
    \Edge[label = $4$, bend=-5](7)(4)
    \Edge[label = $6$, bend=10, Direct](5)(4)
    \Edge[label = $3$, bend=10](4)(5)
    
    \end{tikzpicture}
    
    \caption{Partial $\efx$ orientation obtained after executing Algorithm~\ref{alg:greedy} satisfies Properties (1)-(3). The vertices in blue are envied.}
    \label{fig:example-2}
\end{figure}

\begin{figure}[h]
    \centering
    \begin{tikzpicture}
    
    \Vertex[label = 1, x = 2, y = 2]{1}
    \Vertex[label = 2, x = 4, y = 2]{2}
    \Vertex[label = 3, x = 6, y = 2, color = white]{3}
    \Vertex[label = 4, x = 8, y = 2, color = white]{4}
    \Vertex[label = 5, x = 7, y = 4, color = white]{5}
    \Vertex[label = 6, x = 1, y = 4, color = white]{6}
    \Vertex[label = 7, x = 5, y = 0, color = white]{7}
    
    \Edge[label = $10$, bend=-5, Direct](5)(1)
    \Edge[label = $10$, bend=-10, Direct](5)(2)
    \Edge[label = $9$, bend=-10, Direct](2)(5)
    \Edge[label = $8$, bend=0, Direct](5)(3)
    \Edge[label = $6$, bend=10, Direct](1)(6)
    \Edge[label = $5$, bend=-10](1)(6)
    \Edge[label = $6$, bend=10, Direct](2)(6)
    \Edge[label = $6$, bend=-10](2)(6)
    \Edge[label = $7$, bend=-5, Direct](3)(6)
    \Edge[label = $6$, bend=10, Direct](1)(7)
    \Edge[label = $5$, bend=-10](1)(7)
    \Edge[label = $6$, bend=10, Direct](2)(7)
    \Edge[label = $6$, bend=-10](2)(7)
    \Edge[label = $7$, bend=0, Direct](3)(7)
    \Edge[label = $3$, bend=-5, Direct](4)(7)
    \Edge[label = $4$, bend=-5, Direct](7)(4)
    \Edge[label = $6$, bend=10, Direct](5)(4)
    \Edge[label = $3$, bend=10, Direct](4)(5)
    
    \end{tikzpicture}
    
    \caption{Partial $\efx$ orientation obtained after executing Algorithm~\ref{alg:prop-4} satisfies properties (1)-(4). Note that, Case (1) is executed for the pair $(4,5)$ of vertices, Case (2) for the pair $(7,4)$ of vertices, and Case (3) for the pairs $(6,2), (6,1), (7,2), (7,1),$ and $(7,3)$ of vertices. The vertices in blue are envied. It is easy to check the correctness of Claim \ref{claim:non-envied-I} for non-envied vertices.}
    \label{fig:example-3}
\end{figure}

Observe that, property (5) is automatically \emph{satisfied} in the output (partial) orientation of Algorithm~\ref{alg:prop-4} , and hence, we do not need to run Algorithm \ref{alg:non-envied-unalloc}. In fact, for any input instance with additive valuations, if properties (1)-(4) are satisfied, then property (5) is also satisfied. Note that this does not necessarily hold when the valuations are monotone. We formalize this claim as follows:

\begin{restatable}{claim}{claimnonenviedI}\label{claim:non-envied-I}
        For any fair division instance on a bipartite multi-graph with additive valuations, after satisfying properties (1)-(4) (using Algorithms~\ref{alg:greedy} and \ref{alg:prop-4}, executed in that order), we have $v_i(U_i(X)) \le v_i(X_i)$ for every non-envied vertex $i \in [n]$, i.e., property (5) is also satisfied.
\end{restatable}
\begin{proof}
    Consider a non-envied vertex $i \in [n]$ in the output (partial) orientation $X$ (obtained after running Algorithms~\ref{alg:greedy} and \ref{alg:prop-4}). By property (4), $A_{i, j}(X) = \emptyset$ for all $j \neq i$. We define the set $R = \{k \in [n] :  A_{k, i}(X) \neq \emptyset\}$. Observe that by Claim~\ref{claim:unallocated_edges}, any $k \in R$ must be envied and $U_i(X) = \bigcup\limits_{k \in R} A_{k, i}(X)$. We assume $R \neq \emptyset$; otherwise, the claim holds trivially for $i$. We now consider an arbitrary agent $k \in R$. Again, Claim \ref{claim:unallocated_edges} implies that the set $E(k, i) \setminus A_{k, i}(X)$ is allocated to $i$. Note that agent $i$ has received this set either in Algorithm \ref{alg:greedy} or in the third case of Algorithm \ref{alg:prop-4}. Since, in both cases agent $i$ had the possibility of choosing between $A_{k, i}(X)$ and $E(k ,i) \setminus A_{k, i}(X)$, we have $v_i(E(k, i) \setminus A_{k, i}(X)) \ge v_i(A_{k, i}(X))$. Thus, we can write,
    \begin{align*}
        v_i(U_i(X)) &= v_i(\cup_{k \in R} A_{k,i}(X)) \\&\le v_i(\bigcup\limits_{k \in R} (E(k, i) \setminus A_{k, i}(X))) \\&\le v_i(X_i),
    \end{align*}

    where the last inequality holds due the fact that $E(k, i) \setminus A_{k, i}(X) \subseteq X_i$ for all $k \in R$ and these sets do not intersect with each other.
\end{proof}

Since property (5) is automatically satisfied for additive valuations, we skip Algorithm~\ref{alg:non-envied-unalloc} and directly run Algorithm \ref{alg:safe-set-monotone}, which satisfies property (6) while maintaining the first five properties. Figure \ref{fig:example-4} depicts the output allocation at this point. 

\begin{figure}[h]
    \centering
    \begin{tikzpicture}
    
    \Vertex[label = 1, x = 2, y = 2]{1}
    \Vertex[label = 2, x = 4, y = 2, color = white]{2}
    \Vertex[label = 3, x = 6, y = 2, color = white]{3}
    \Vertex[label = 4, x = 8, y = 2, color = white]{4}
    \Vertex[label = 5, x = 7, y = 4, color = white]{5}
    \Vertex[label = 6, x = 1, y = 4, color = white]{6}
    \Vertex[label = 7, x = 5, y = 0, color = white]{7}
    
    \Edge[label = $10$, bend=-5, Direct](5)(1)
    \Edge[label = $10$, bend=10, Direct](2)(5)
    \Edge[label = $9$, bend=10, Direct](5)(2)
    \Edge[label = $8$, bend=0, Direct](5)(3)
    \Edge[label = $6$, bend=10, Direct](1)(6)
    \Edge[label = $5$, bend=-10](1)(6)
    \Edge[label = $6$, bend=10, Direct](2)(6)
    \Edge[label = $6$, bend=10, Direct](6)(2)
    \Edge[label = $7$, bend=-5, Direct](3)(6)
    \Edge[label = $6$, bend=10, Direct](1)(7)
    \Edge[label = $5$, bend=-10](1)(7)
    \Edge[label = $6$, bend=10, Direct](2)(7)
    \Edge[label = $6$, bend=10, Direct](7)(2)
    \Edge[label = $7$, bend=0, Direct](3)(7)
    \Edge[label = $3$, bend=-5, Direct](4)(7)
    \Edge[label = $4$, bend=-5, Direct](7)(4)
    \Edge[label = $6$, bend=10, Direct](5)(4)
    \Edge[label = $3$, bend=10, Direct](4)(5)
    
    \end{tikzpicture}
    
    \caption{Partial $\efx$ orientation $X$ obtained after executing Algorithm~\ref{alg:safe-set-monotone} satisfies properties (1)-(6). Agent $5$ envies agent $2$ in $X$ and $5 \notin S_2(X)$. Hence, agent $2$ gives the item of value ten to agent $5$ and receives the item with value nine from agent $5$, and then receives $U_2(X)$. Agent $5$ envies agent $1$ in $X$ and $5 \in S_1(X)$. In the figure, the only unallocated edges are between agents $6$ and $1$, and $7$ and $1$. In the last step, since agent $5$ envies agent $1$, we can allocate both the unallocated edges to agent $5$, achieving a complete $\efx$ allocation.}
    \label{fig:example-4}
\end{figure}
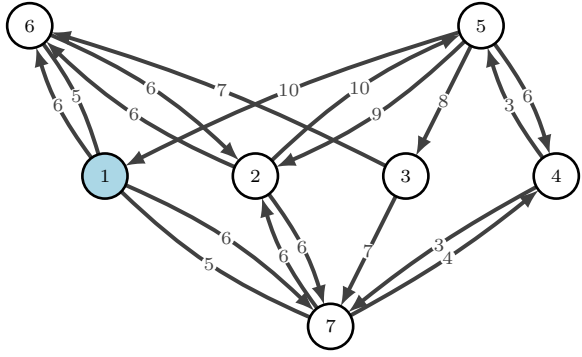

Note that, in the output partial $\efx$ orientation $X$ of Algorithm~\ref{alg:safe-set-monotone}, it is only agent $5$ who envies agent $1$. Therefore, we allocate all the remaining edges to agent $5$, yielding a complete $\efx$ allocation.

\end{appendix}
\end{document}